\declaretheorem[style=definition]{definition}
\declaretheorem[style=plain]{theorem}
\newcommand{\R}[1]{} 
\begin{document}
\title{Size change, shape change, and the growth space of a community}
\author{Matthew Spencer,\\School of Environmental Sciences, University of Liverpool, Liverpool, L69 3GP, UK.\\m.spencer@liverpool.ac.uk}
\maketitle

Keywords: Aitchison distance, community dynamics, per capita growth, proportional growth, succession, surreal numbers.

Colour figures: figure \ref{fig:perturbation}; figure \ref{fig:succession}; figure \ref{fig:heronisland}.

Appendices:

\ref{ap:invariance} Invariance under exponential growth and scaling invariance

\ref{ap:existing} Existing measures of rate of shape change

\ref{ap:derive} Deriving a measure of rate of shape change

\ref{ap:maxinvar} What kinds of functions are scaling invariant?

\ref{ap:estimate} Estimating size and shape change

\ref{ap:ce} Measuring colonization and extinction

\ref{ap:pa} Size and shape change in presence-absence data

R code and data used in the examples (\verb+succession.zip+). FOR REVIEW, AVAILABLE FOR DOWNLOAD FROM \url{http://www.liv.ac.uk/~matts/rateofsuccession.html}.



\section*{Abstract}

Measures of biodiversity change such as the Living Planet Index describe proportional change in the abundance of a typical species, which can be thought of as change in the size of a community. Here, I discuss the orthogonal concept of change in relative abundances, which I refer to as shape change. To be logically consistent, a measure of the rate of shape change should be scaling invariant (have the same value for all data with the same vector of proportional change over a given time interval), but existing measures do not have this property. I derive a new, scaling invariant measure. I show that this new measure and existing measures of biodiversity change such as the Living Planet Index describe different aspects of dynamics. I show that neither body size nor environmental variability need affect the rate of shape change. I extend the measure to deal with colonizations and extinctions, using the surreal number system. I give examples using data on hoverflies in a garden in Leicester, UK, and the higher plant community of Surtsey. I hypothesize that phylogenetically-restricted assemblages will show a higher proportion of size change than diverse communities.

\clearpage

\section{Introduction}

The Living Planet Index \citep{Loh05} and similar indices \citep[e.g.][]{Buckland11, Burns13, Dirzo14} measure proportional change in the abundance of a typical species. They are used by a number of major conservation organizations and have resulted in alarming headline figures such as the mean $25\%$ decline in abundance of terrestrial vertebrate populations between 1970 and 2000 \citep{Loh05}. I will refer to the property measured by these indices as ``size'' change. It is obvious that there are other aspects of change in the abundances of species in a community, which may also be of both theoretical and applied interest. I will concentrate on one of these:
\begin{definition}
``Shape'' change is ``change in the relative abundances of species in a community'' \citep{Lewis78}, where the relative abundance of a species is its abundance divided by the sum of abundances of all species in the community. 
\label{def:ratesuc}
\end{definition}
\citet{Lewis78} used Definition \ref{def:ratesuc} as part of a definition of succession, and measuring the rate of change in relative abundances is an important aspect of testing theories about succession \citep[e.g.][]{Boit14, Walker14}. However, in this manuscript I avoid the term ``succession'', because it has been associated with a much wider set of changes in communities and ecosystems \citep{Odum_767}. A number of measures of the rate of shape change have been proposed \citep{Jassby74, Lewis78, Foster00}, and measures developed for other purposes \citep{Field82, Legendre01} have also been used to measure shape change. Nevertheless, there has been little systematic consideration of how such a measure should behave, or of the connection to size change. \R{A56}I believe that the field would benefit from the kind of systematic approach that has been applied to measures of evenness \citep[e.g.][]{Smith96} and other aspects of diversity \citep[e.g.][]{Jost07,Leinster12}.

In this paper, I first outline the properties that a measure of rate of shape change should have. This \R{A26}gives a rough idea of how a suitable measure can be derived. After deriving such a measure, I show that it has two simple interpretations: first, as proportional to the Aitchison distance \citep{Aitchison92} between two sets of relative abundance data; and second, as the among-species sample standard deviation of mean proportional population growth rates over a finite time period. Note that the proportional growth rate for a species with positive abundance $x(t)$ at time $t$ is defined as $(1/x) \mathrm{d} x / \mathrm{d} t$, assuming that $x(t)$ is differentiable with respect to $t$\R{A13}\footnote{Widely-used near-synonyms include `per capita growth rate' and `specific growth rate'. However, `per capita' is not strictly appropriate when individuals are not well defined (e.g. clones, colonies) or not measured (e.g. proportional cover, biomass), and the term `proportional growth rate' agrees with the usages `proportional change' and `proportional scale', which are important in this context.}. I define the growth space of a community as a real space whose axes represent proportional growth rates of each species. \R{A70}I show that the new measure of rate of shape change and the Living Planet Index \citep{Loh05} are proportional to the lengths of projections onto orthogonal subspaces of growth space, and thus measure distinct aspects of community dynamics. \R{A72}The geometry of growth space leads to the results that body size, generation time and environmental variability have no necessary connection to the rate of shape change. I illustrate the calculation of the rate of shape change using data on hoverflies in a garden in Leicester, UK. I then extend the approach to deal with colonization and extinction, making use of the surreal number system, \R{A30}which includes quantities further from and closer to zero than any real number \citep{Conway01}. The surreal numbers are useful because, for example, an extinction should represent a larger change than any reduction in abundance that does not involve extinction, and yet such reductions can lead to arbitrarily large real numbers. I apply this extended method to the higher plant community on the volcanic island of Surtsey. Finally, I discuss the hypothesis that phylogenetically-restricted assemblages will show a relatively high proportion of size change compared to diverse communities. \R{A80}Notation used throughout is summarized in Table \ref{tab:symbols}.

\begin{table}
\caption{Notation. Major symbols used in both main text and appendices. Quantities for which no dimensions are indicated are dimensionless. Abundances are shown as dimensionless here, although in some cases they will have dimensions such as numbers per unit area.}
\label{tab:symbols}
\begin{tabular}{ l l l}
Symbol & Meaning & Dimensions\\
\hline
$c_i$ & observed count of the $i$th species\\
$k_1$ & number of species present at start and end of a time interval\\
$k_2$ & number of species present at start but absent at end of a time interval\\
$k_3$ & number of species absent at start but present at end of a time interval\\
$k_4$ & number of species absent at start and end of a time interval\\
$\mathbf M$ & an $ n \times n$ diagonal matrix with positive diagonal elements $m_i$\\
$n$ & number of species \\
$p_i$ & relative abundance $x_i/(\sum_{j=1}^n x_j)$ of the $i$th species \\
$\mathbf q$ & projection of $\mathbf r$ onto the subspace orthogonal to line of equal proportional growth rates & $T^{-1}$\\
$r$ & proportional growth rate $(1/x) \mathrm{d} x / \mathrm{d} t$ & $T^{-1}$\\
$\tilde{r}_i(t, t + \Delta t)$ & mean proportional growth rate over time interval $(t, t + \Delta t]$ & $T^{-1}$\\
$\mathbf r$ & vector of proportional growth rates (or mean proportional growth rates) & $T^{-1}$\\
$\bar{r}$ & among-species sample mean of mean proportional growth rates & $T^{-1}$\\
$s_r$ & among-species sample standard deviation of mean proportional growth rates & $T^{-1}$\\
$\mathcal S_1$ & set of species present at start and end of a time interval\\
$\mathcal S_2$ & set of species present at start but absent at end of a time interval\\
$\mathcal S_3$ & set of species absent at start but present at end of a time interval\\
$\mathcal S_4$ & set of species absent at start and end of a time interval\\
$t$ & time & $T$\\
$\Delta t$ & a time interval (not necessarily small) & $T$\\
$\theta$ & angle between $\mathbf r$ and line of equal proportional growth rates\\
$\mathbf u$ & projection of $\mathbf r$ onto the line of equal proportional growth rates & $T^{-1}$\\
$v_{ij}$ & ratio of abundances $x_i/x_j$ of a pair of species at a given time\\
$\mathbf v$ & vector of abundance ratios $v_{ij}$\\
$w_k$ & mean proportional rate of change in the $k$th abundance ratio in $\mathbf v$\\
$\mathbf w$ & vector of mean proportional rates of change in abundance ratios in $\mathbf v$\\
$x_i$ & abundance of the $i$th species\\
$\mathbf x, \mathbf y$ & abundance vectors for a set of species\\
$\psi$ & the surreal number $\omega^{1/\omega}$\\
$\omega$ & simplest surreal number greater than all positive real numbers\\
\end{tabular}
\end{table}

\section{Properties of a measure of rate of shape change}

In this section, I list the properties that I believe a measure of the rate of shape change should have in order to match up with biological intuition. \R{A27}If we accept these properties, we will know a lot about what a suitable measure should look like, even before we have attempted to derive it. \R{A64}It is of course true that others might choose a different list of properties, and thus arrive at measures that will be useful for different purposes.

\begin{enumerate}[Property 1.]

\item \label{property:ratefunc} By Definition \ref{def:ratesuc}, the rate of shape change should be expressible as a function of relative abundances and time alone. Although not explicit in the definition, I will here be concerned with scalar measures of rate (in physical terms, ``speed'' rather than ``velocity''). This is needed because one will often want to make comparisons of rates between communities containing different sets of species, which cannot be done if the rate has multiple components, each associated with a particular species or group of species.

If the community consists of $n$ species (not all of which may be observed at a particular time), then denote by $x_i \geq 0$ a measure of the abundance of the $i$th species, $i=1,2,\ldots,n$. The abundances of all species in the community at time $t$ can then be represented as a column vector $\mathbf x(t) \in \mathbb R_{\geq 0}^n$ (the $n$-dimensional space of positive real numbers).

The relative abundance of the $i$th species at time $t$ is $p_i(t) = x_i(t)/\sum_{j=1}^n x_j(t)$. The set of relative abundances for all species is also known as a composition, and is a vector in the unit simplex $\mathbb S^{n-1}$. Although it is necessary that a rate of shape change can be expressed in terms of relative abundances and time alone, many other necessary properties are easier to interpret in terms of absolute abundances.

\item \label{property:timefunc} The rate should be a mean rate of change over a finite time interval. This is required if such a rate is to be calculated from observations taken at discrete points in time. Furthermore, it is often useful to average out stochastic variability in community composition, so as to focus on underlying trends. 

Given observations at the start and end of a time interval $(t, t+\Delta t]$ and the intuitive concept of speed, the rate of shape change should be a function of the form
\begin{equation*}
f(\mathbf x (t), \mathbf x(t + \Delta t), \Delta t) = \frac{1}{\Delta t} g(\mathbf x (t), \mathbf x(t + \Delta t)).
\end{equation*}
This means that we need only find a suitable way of measuring the difference between two abundance vectors.

\item \label{property:scaleinvar} Defining shape change as change in relative rather than absolute abundances (Definition \ref{def:ratesuc}) means that it should not be altered if all abundances at a given time are multiplied by a constant. In other words,
\begin{equation}
\begin{aligned}
g(\mathbf x,\mathbf y) = g(\phi \mathbf x, \rho \mathbf y), \quad \phi>0, \rho>0.
\end{aligned}
\label{eq:scaleinvar}
\end{equation}
\R{A34}Any function of a composition satisfying Equation \ref{eq:scaleinvar} is expressible in terms of ratios of the form $x_i/x_j$ \citep{Aitchison92}.  This property also means that we can work with samples from which absolute abundance data are not available (which can apply to everything from light traps to environmental sequencing).

\item \label{property:zero} If no net change has occurred over a time interval, the rate of shape change over that interval should be zero. Intuitively, if all the abundances are unchanged, a measure that depends only on those abundances and the time interval should also be unchanged. In other words,
\begin{equation}
\begin{aligned}
g(\mathbf x,\mathbf x) = 0.
\end{aligned}
\label{eq:nochange}
\end{equation}

\item \label{property:nonneg} The rate of shape change should be non-negative. Intuitively, the community is different in some way if any of the relative abundances have changed, and a measure of the ``speed'' of change over some time interval should be positive if any change has occurred. In other words,
\begin{equation*}
\begin{aligned}
g(\mathbf x,\mathbf y) \geq 0.
\end{aligned}
\end{equation*}
\R{A79}Properties \ref{property:zero} and \ref{property:nonneg} together imply that the rate of shape change cannot be additive over subintervals. Suppose that over two successive subintervals we move from $\mathbf x$ to $\mathbf y \neq \mathbf x$ and then back to $\mathbf x$. From property \ref{property:zero}, the net change over the combined interval is $g(\mathbf x, \mathbf x) = 0$. From property \ref{property:nonneg}, the net change over each of the subintervals is positive.  

\item \label{property:neutral} If all species have the same proportional growth rate, relative abundances do not change, and the rate of shape change should be zero. This is the important special case of a \R{A2} deterministic neutral community. If Equations \ref{eq:scaleinvar} and \ref{eq:nochange} hold, setting $\mathbf y = \mathbf x$ in Equation 1 shows that this will be the case. In all other cases, the rate should not be zero. In other words,
\R{A1} \begin{equation*}
\begin{aligned}
g(\mathbf x,\mathbf y) = 0 \text{ if and only if } \mathbf y = \phi \mathbf x, \, (\phi>0).
\end{aligned}
\end{equation*}

\item \label{property:perturbinvar} \R{A76}The rate of shape change should be the same for all abundance trajectories having the same vector of proportional changes over some specified time interval. This is both the most important and the most difficult requirement to grasp, and is needed in order to make the rate of shape change consistent with the way in which population growth is measured.

For a single species, the same amount of proportional growth has occurred if the population of that species increases from 1 to 10 as if it increases from 10 to 100 units. This can be justified at the individual level by considering the population as a birth-death process. If the population changes by the same proportion in each case, then any individual experiences the same balance between proportional birth and death rates. Similarly, for a set of three species, the amount of proportional change from $[1,1,1]^{\prime}$ (where the prime denotes transpose) to $[1.1,0.9,1.2]^{\prime}$ is the same as from $[10,10,10]^{\prime}$ to $[11,9,12]^{\prime}$, and from $[1,10,1]^{\prime}$ to $[1.1,9,1.2]^{\prime}$. Thus, we require,
 \begin{equation}
g(\mathbf x, \mathbf y) = g(\mathbf M \mathbf x, \mathbf M \mathbf y)
\label{eq:perturbinvar}
\end{equation}
for all $n \times n$ matrices $\mathbf M$ with positive diagonal elements $m_i$ and zero off-diagonal elements. In other words, there is an equivalence class consisting of all pairs $(\mathbf x, \mathbf y)$ such that for all $i \in 1, \ldots, n$, $y_i / x_i = m_i$ for some species-specific constants $m_i$. \R{A33}Since multiplication by $\mathbf M$ corresponds to a scaling transformation with factors $m_1, m_2, \ldots m_n$, a function satisfying Equation \ref{eq:perturbinvar} may be called scaling invariant\footnote{The term used in compositional data analysis is perturbation invariant \citep[e.g.][]{Aitchison92, Egozcue03}, but in ecology this is likely to cause confusion with the idea of a perturbation as a disturbance to a system. Note that scaling invariance is distinct from Property \ref{property:scaleinvar}, which is sometimes called scale invariance.}. \R{A35}Figure \ref{fig:perturbation} explains scaling invariance visually. Consider any arbitrary abundance trajectory (left, dashed lines, here shown for three species), with two abundance vectors $\mathbf x$ and $\mathbf y$ lying on this trajectory, separated by time $\Delta t$. Applying the scaling $\mathbf M$ to the entire trajectory transforms $\mathbf x$ and $\mathbf y$ to $\mathbf M \mathbf x$ and $\mathbf M \mathbf y$ respectively (right). A function is scaling invariant if its value is unchanged by the action of any scaling matrix $\mathbf M$ with positive diagonal elements on any two abundance vectors $\mathbf x$ and $\mathbf y$.

\begin{figure}
\includegraphics[height=14cm]{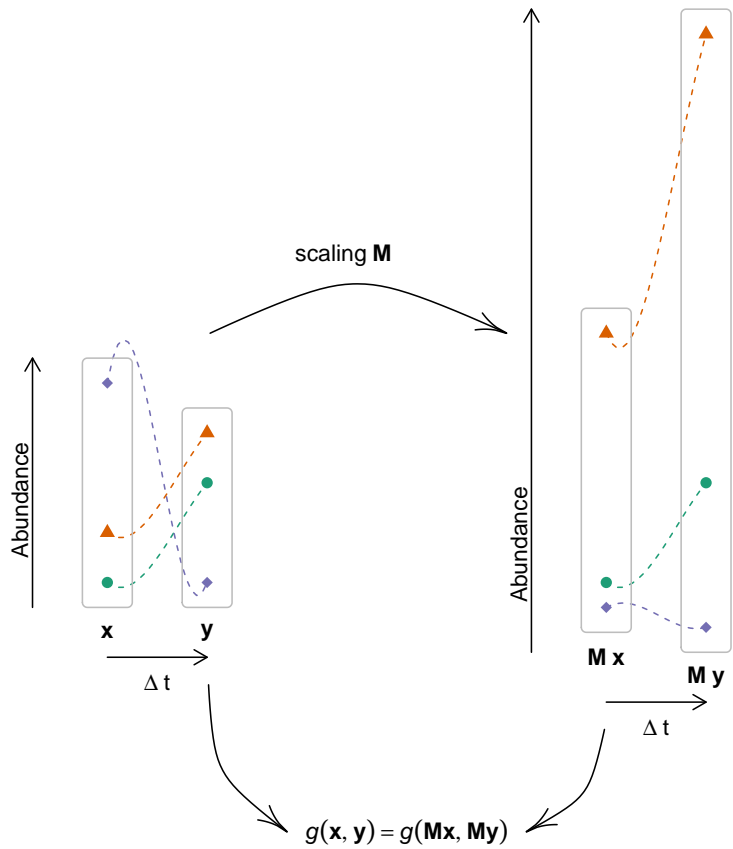}
\caption{Scaling invariance (Property \ref{property:perturbinvar}). Left: arbitrary abundance trajectory for three species represented as dashed lines, with abundance on the vertical axis and time on the horizontal axis. Abundance vectors $\mathbf x$ and $\mathbf y$ lie on this trajectory, separated by time $\Delta t$ (circle, triangle and diamond symbols represent the three abundances in each vector). Applying the scaling $\mathbf M$ to the whole trajectory leads to corresponding abundance vectors $\mathbf M \mathbf x$ and $\mathbf M \mathbf y$ (right). If the function $g(\mathbf x, \mathbf y)$ is scaling invariant, then it is unchanged by the action of the scaling $\mathbf M$.}
\label{fig:perturbation}
\end{figure}

\R{A37}Scaling invariance is important theoretically if we want to take an organism-centred view of temporal dynamics. For any species, level sets in niche space are sets of points for which the species has the same proportional growth rate, and therefore experiences an environment of equivalent quality. This idea extends the concept of the Hutchinson niche, which is bounded by the level set of zero proportional growth rate \citep{Maguire73}. A scaling $\mathbf M$ applied to an abundance trajectory does not alter the proportional growth rate for any species at any time, and therefore should not alter a measure that tells us about how organisms experience change.

An unrelated but important practical consequence of scaling invariance is that there is no need to measure abundance in the same units for every species, provided that the same units are used for a given species at all time points. For example, the abundance of one species could be measured in percentage cover, and that of another species in numbers trapped in a pitfall trap, without affecting the value of a scaling invariant measure of rate of change. In this case the matrix $\mathbf M$ can be thought of as representing the conversion factors needed to put all abundances in the same units, and if Equation \ref{eq:perturbinvar} holds, $\mathbf M$ does not need to be known. \R{A3}Similarly, if detection probabilities differ among species but remain constant over time, they do not affect the value of a scaling invariant measure of rate of change (although if detection probabilities change over time in different ways for different species, such a measure will be affected).

There is one obvious objection to the idea that change should be measured on a proportional scale: it might be argued that the proportional scale gives too much weight to rare species. Many important properties of communities and ecosystems can be approximated as functions of abundance. Examples include the structural complexity of the community, the rate of primary production, and the conservation value of the community. The simplest plausible functions relating these properties to abundance are linear combinations of abundances, where each species has a weight that measures the contribution of a unit of abundance of that species to the function \citep[e.g.][]{Gross05}. If the abundance of a rare species is doubled, many of these properties might be almost unchanged. In contrast, doubling the abundance \R{A4}of a common species might result in large changes in these properties. 

There are two answers to this objection. \R{A48}First, a scaling invariant function must give equal weight to every species (\ref{ap:maxinvar}). Thus, the decision to take an organism-centred view of temporal dynamics (which leads to scaling invariance) precludes also taking the ecosystem function-centred view outlined above. Second, I believe that if changes in ecosystem function are of interest, they should be studied directly, rather than through using change in abundances as a surrogate. Rare species will sometimes be important to ecosystem function. For example, a rare but very large tree species might have a big effect on structural complexity, a rare keystone predator might have a big indirect effect on primary production, and a rare but endangered organism such as a rhinoceros might have a big effect on conservation value. Thus, it is meaningless to talk about a measure that gives ``too much weight'' to a rare species unless the right weight to give each species is known. If the weights are known, the appropriate function can be studied. If the weights are unknown, studying abundances will be misleading. Even the simplest families of population growth models such as the exponential and the logistic are not closed under addition \citep[p. 89]{Kingsland85}. Thus, a weighted sum of abundances of populations, each growing according to a different member of such a family, will have qualitatively different behaviour from any member of the family. In summary, changes in abundance should be studied only if they are of interest in their own right, not as a surrogate for some other property.

\R{A44}Exponential growth is an important idealized case in population dynamics, playing a similar role in population biology to that of a body with no forces acting on it in Newtonian physics \citep[chapter 6]{Ginzburg04}. It also provides an easy way to demonstrate whether a measure of the rate of shape change could be scaling invariant. In this special case, each population experiences an environment of constant quality, and grows at a constant proportional rate. In other words all relevant aspects of the system have either zero or non-zero but constant proportional rates of change. A necessary but not sufficient condition for scaling invariance is that a measure of the rate of shape change is constant over time when every species is growing exponentially (\ref{ap:invariance}). \R{A28}The measures of rate of shape change in common use \citep{Jassby74, Lewis78, Field82, Foster00, Legendre01} are not constant over time under exponential growth (Figure \ref{fig:succession}A and \ref{ap:existing}), and are therefore not scaling invariant. For example, it is easy to see that measures of the rate of shape change based on squared \citep{Foster00} or absolute \citep{Lewis78} differences in composition will not be invariant under constant proportional changes in composition resulting from exponential growth.

\begin{figure}
\includegraphics[height=14cm]{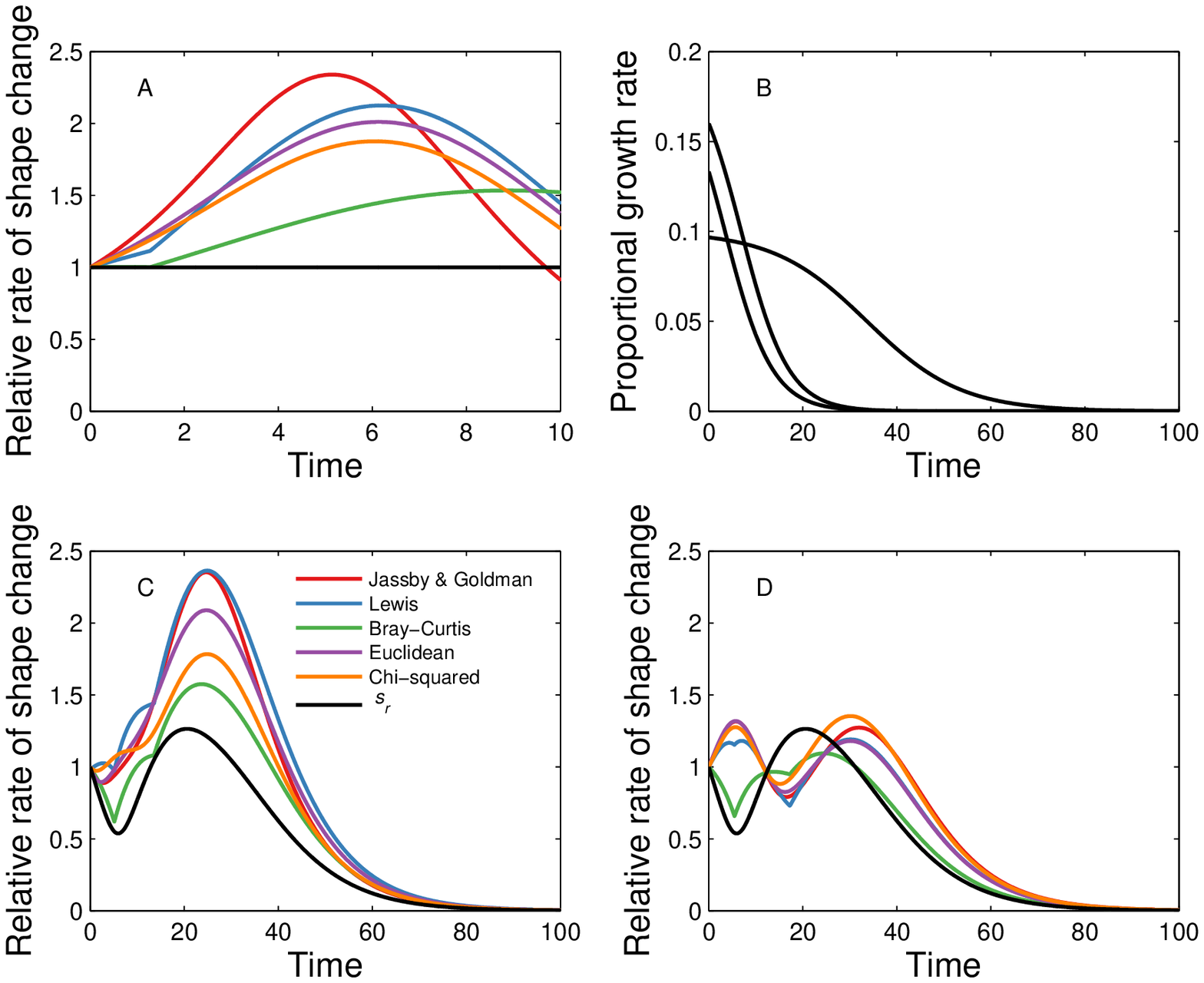}
\caption{A: relative rate of shape change against time for three species growing exponentially, as measured by $s_r$ (black line, Equation \ref{eq:sdgrowth}, the among-species standard deviation of proportional growth rates) and five existing measures of rate of shape change \citep[red, blue, green, purple, orange, respectively, as in panel C key]{Jassby74, Lewis78, Field82, Foster00, Legendre01}. B: proportional growth rate [$(1/x_i) \mathrm d x_i / \mathrm d t$] against time for three species growing logistically. C: relative rate of shape change against time for three species growing logistically as in panel B. D: relative rate of shape change against time for three species growing logistically, as in panel B except that initial abundances and carrying capacities (and all abundances on the trajectory between these points) are scaled by $\mathbf M = \text{diag}(1/3, 1/9, 1/2)$, keeping proportional growth rates constant. In A, C, and D: Euclidean is $1/\Delta t$ times the distance in \citet{Foster00}; Chi-squared is $1/\Delta t$ times the distance in \citet{Legendre01}; Bray-Curtis is $1/\Delta t$ times the distance in \citet{Field82}, calculated on relative abundances with a 1/4 power transform; constant time interval $\Delta t = 0.001$ between observations; each measure of rate of shape change is plotted relative to its value over the first time interval. Parameters for A: initial abundances $[0.5, 0.1, 0.4]^\prime$; proportional growth rates $[-0.2, 0.2, -0.1]^\prime$. Parameters for B and C: initial abundances $[1, 0.3, 0.4]^\prime$, proportional growth rates at low abundance $[0.2, 0.1, 0.2]^\prime$, carrying capacities $[3, 9, 2]^\prime$. Parameters for D: initial abundances $[1/3, 1/30, 1/5]^\prime$, proportional growth rates at low abundance $[0.2, 0.1, 0.2]^\prime$, carrying capacities $[1, 1, 1]^\prime$. }
\label{fig:succession}
\end{figure}

The arguments above do not need the assumption that exponential growth is common in nature. Any two patterns of growth resulting in the same pattern of proportional change over a given time interval should correspond to the same rate of shape change, no matter what trajectories the populations involved followed to get from the start to the end of the interval.

\item \label{property:richness} A measure of the rate of shape change should not depend (in a statistical sense) on the number of species being studied. This is important for two reasons. First, it is usually impossible to measure the abundance of every species in the community. If instead we have abundances for a random sample of species, then a measure whose expected value does not depend on the number of species sampled can be used to estimate the rate of shape change in the entire community. \R{A10}In general, it will be difficult to obtain a random sample of species. Selecting species at random from a complete species list is rarely possible. Choosing individuals at random and identifying them does not give a random sample of species, because more abundant species are more likely to be included in the sample. In practice, available data are often used without any attempt at random sampling \citep[e.g.][p. 72]{Collen13}. Nevertheless, independence of the number of species studied remains a desirable statistical property.

Second, we will often want to compare communities with different numbers of species. Other things being equal, there is no intuitive reason why a change in the number of species should be associated with a change in the rate of shape change.

\end{enumerate}

\section{A new measure of rate of shape change}

In what follows, I initially restrict attention to cases where there is no colonization or extinction, so that all abundances are positive. I later show how to deal with colonization and extinction in a way that is consistent with the approach taken below.

Consider a single species with abundance $x(t)$ at time $t$, where $x(t)$ is a \R{A11}differentiable, positive function of time, and proportional growth rate $(1/x) \mathrm{d}x / \mathrm{d} t$. The mean proportional growth rate over the time interval $(t,t + \Delta t]$ is
\begin{equation}
\tilde{r}(t,t+\Delta t) = \frac{1}{\Delta t} \int_t^{t+\Delta t} \frac{1}{x} \frac{\mathrm d x}{ \mathrm d u} \, \mathrm du = \frac{\log x(t + \Delta t) - \log x(t)}{\Delta t}, \quad x>0,
\label{eq:meanrate}
\end{equation}
(throughout, $\log$ denotes the natural logarithm). \R{A12}The requirement that $x(t)$ is differentiable is not much of a limitation, because even in stochastic models where abundance may not be differentiable, there is usually a differentiable expected abundance to which we can apply Equation \ref{eq:meanrate}. \R{A74}The mean proportional growth rate is obviously a constant for all abundance trajetories resulting in the same amount of proportional growth over a given time interval. I show in \ref{ap:derive} that the natural generalization of Equation \ref{eq:meanrate} to a scalar measure of the rate of shape change is the among-species sample standard deviation of mean proportional growth rates,\R{A19}
\begin{equation}
s_r = \left [\frac{1}{n-1} \sum_{i=1}^n (\tilde{r}_i (t, t + \Delta t) - \bar{r}(t+ \Delta t))^2 \right] ^{1/2},
\label{eq:sdgrowth}
\end{equation}
where $\tilde{r}_i(t, t + \Delta t)$ is the mean proportional growth rate of the $i$th species over the time interval $(t, t + \Delta t]$, and $\bar{r}(t, t + \Delta t)$ is the among-species mean of mean proportional growth rates over this interval:
\begin{equation}
\bar{r}(t, t + \Delta t) = \frac{1}{n} \sum_{i=1}^n \tilde{r}_i(t, t + \Delta t).
\label{eq:rbar}
\end{equation}
I show in \ref{ap:derive} that Equation \ref{eq:sdgrowth} satisfies Properties \ref{property:ratefunc} to \ref{property:richness} \R{A46}(although Property \ref{property:richness} is only satisfied asymptotically). In particular, this measure is scaling invariant, unlike existing measures of the rate of shape change. For example, the black line in Figure \ref{fig:succession}A is the value of $s_r$ for a numerical example in which all species are growing exponentially. \R{A23}I show in \ref{ap:derive} that Equation \ref{eq:sdgrowth} is proportional to the Aitchison distance between the relative abundances at two time points. The Aitchison distance is the standard measure of distance between compositions \citep{Aitchison92, Egozcue03}. \R{A77}Finally, I show in \ref{ap:derive} that Equation \ref{eq:sdgrowth} can also be expressed in terms of relative abundances:
\begin{equation}
s_r = \frac{1}{\Delta t}\left[ \frac{1}{n(n-1)} \sum_{i=1}^{n-1} \sum_{j=i+1}^n \left( \log\left( \frac{p_i(t + \Delta t)}{p_j (t+\Delta t)} \right) - \log\left( \frac{p_i(t)}{p_j(t)}  \right)  \right)^2 \right]^{1/2}.
\label{eq:srp}
\end{equation}

\R{A24}Although I have shown that $s_r$ is scaling invariant while a number of other proposed measures of shape change are not, one might wonder whether there are other possible scaling invariant measures of shape change. In \ref{ap:maxinvar}, I show that any scaling invariant function is a function of the vector $\mathbf r$ of mean proportional growth rates. Among such functions, a function proportional to the Aitchison distance is in some sense the simplest way to measure change in relative abundances \citep{Aitchison92}.

\section{A logistic growth example}

\R{A49}In this section, I illustrate how $s_r$ and existing measures of the rate of shape change behave in the simple case of logistic growth. \R{A52}Exponential growth is defined by constant proportional growth rates. Under patterns of growth other than exponential, proportional growth rates do not remain constant over time, and so the rate of shape change should not in general be constant. For example, Figure \ref{fig:succession}B shows proportional growth rates for three species under logistic growth, and Figure \ref{fig:succession}C shows the corresponding values of $s_r$ (black line) and several existing measures of rate of shape change. Applying any non-identity scaling to the abundance trajectories (for example, $\mathbf M = \text{diag}(1/3, 1.9, 1/2)$) leaves the proportional growth rates (Figure \ref{fig:succession}B) and $s_r$ (Figure \ref{fig:succession}D, black line) unchanged, but causes qualitative changes in existing measures of the rate of shape change (Figure \ref{fig:succession}D, red, blue, green, purple and orange lines). In Figure \ref{fig:succession}D, the Jassby and Goldman, Lewis, Bray-Curtis, Euclidean and chi-squared measures have local minima just before time 20, which are absent in Figure \ref{fig:succession}C. All these measures can also differ qualitatively from $s_r$. In Figure \ref{fig:succession}D, the Jassby and Goldman, Lewis, Euclidean and chi-squared measures have local maxima close to the time at which $s_r$ has its first local minimum, and the Bray-Curtis measure has a local minimum close to the time at which $s_r$ has its first local maximum. It is also obvious from Figure \ref{fig:succession}C and D that rates of shape change need not be decreasing functions of time, even under very simple models of population growth. It has been suggested that during succession, \R{A53}``rate-of-change curves are usually convex, with change occurring most rapidly at the beginning'' \citep{Odum_767} for a wide range of ecosystem properties. The results in Figure \ref{fig:succession} suggest that if real communities typically have decreasing rates of shape change over time, some biologically interesting mechanism \R{A14}must be generating them.\R{A29}
 
\section{Shape change and biodiversity trends}

In this section, I show the geometric connection between the new measure of rate of shape change (Equation \ref{eq:sdgrowth}) and three high-profile measures of biodiversity trends. This connection is a major strength of $s_r$, not shared by existing measures of the rate of shape change\R{A68}. I consider three major measures of biodiversity trends: the the Living Planet Index \citep{Loh05}, which measures global changes in vertebrate populations; the UK Wild Bird Indicators \citep{Buckland11}, which measures changes in UK bird populations; and the Watchlist Indicator, which measures population trends in 155 species of birds, mammals, butterflies, and moths of conservation priority in the United Kingdom \citep[pages 10 and 81]{Burns13}. All these measures are based on the exponential of the among-species mean of mean proportional growth rates. They reveal patterns of major conservation importance, such as the mean 25$\%$ decline in terrestrial vertebrate populations between 1970 and 2000 \citep{Loh05}, and the mean 77$\%$ decline in species of conservation priority in the United Kingdom between 1970 and 2010 \citep[p. 10]{Burns13}. 

Such declines do not necessarily involve shape change in the sense of Definition \ref{def:ratesuc}: if all species decline at the same rate, fewer organisms are present in total, but their relative abundances are unchanged. To understand the connection between shape change and these measures of biodiversity trends, I introduce the idea of the growth space of a community. For a given set of species, every vector of proportional growth rates can be represented as a point in a space $\mathbb R^n$ which I refer to as the growth space of the community (Figure \ref{fig:projection} is a two-dimensional example). In the following I drop the time indexing for simplicity. The line $r_1 = r_2 = \ldots = r_n$ in growth space represents a deterministic neutral community, in which all species have the same proportional growth rate and the rate of shape change is zero (Property \ref{property:neutral}). The growth rate vector $\mathbf r$ can be decomposed into two orthogonal components (\ref{ap:derive}). The first component $\mathbf u$ is the projection of $\mathbf r$ onto the line of equal proportional growth rates, which represents change in abundance without shape change. \R{A18}The among-species mean of mean proportional growth rates (Equation \ref{eq:rbar}) is proportional to the length of this component and is a natural measure of size change. Size change is therefore essentially the same as the natural log of the measures of biodiversity trends described above. This component is important because change purely in the $r_1=r_2=\ldots=r_n$ direction represents balanced exponential growth, a state disallowed by the standard Lotka-Volterra model of consumer-resource dynamics, but possible under ratio-dependent dynamics \citep[section 6.1]{Arditi12}.

\begin{figure}
\includegraphics[height=14cm]{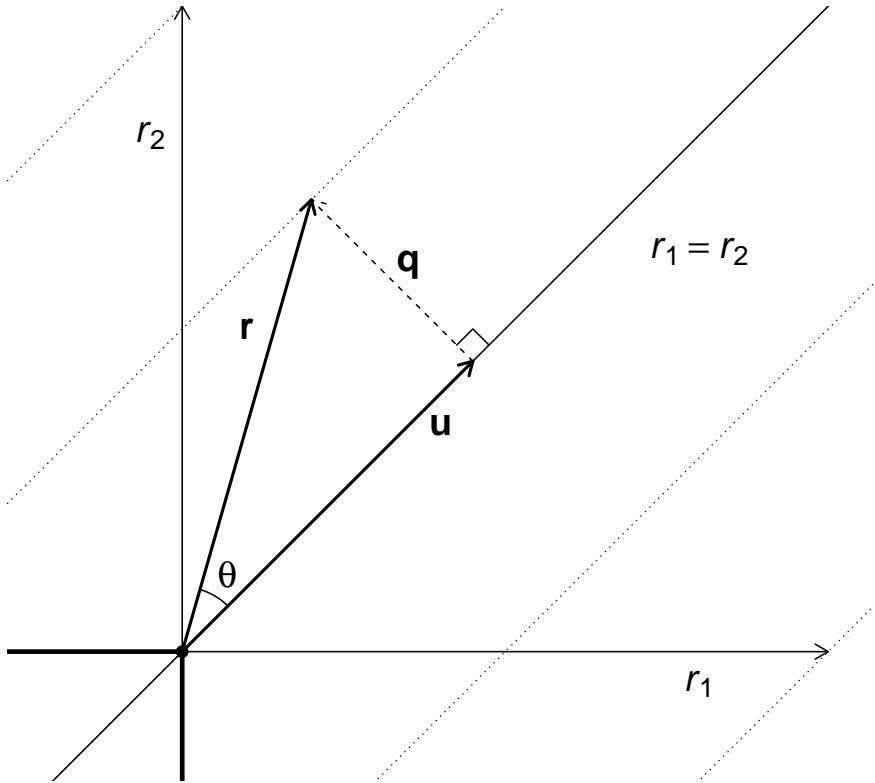}
\caption{Geometric interpretation of the rate of shape change in a two-species community. The axes $r_1$ and $r_2$ are the proportional growth rates of the two species. The position vector $\mathbf r$ represents the mean proportional growth rates for a given time interval. The diagonal line $r_1 = r_2$ represents equal proportional growth rates, and therefore no shape change. The vector $\mathbf u$ is the projection of $\mathbf r$ onto the $r_1=r_2$ line. It represents the rate of change of size of the community, and each of its elements equals the among-species mean of mean proportional growth rates. The norm of the projection $\mathbf q$ of $\mathbf r$ onto the subspace orthogonal to the $r_1=r_2$ line represents change in shape of the community, and is proportional to the among-species standard deviation of mean proportional growth rates. The square of the cosine of angle $\theta$ is change in size as a proportion of total change. All the internal equilibria are mapped to the origin (black dot). On the bold half-lines $(r_1<0, r_2=0)$ and $(r_1=0, r_2 < 0)$, one species declines while the other remains at constant abundance. The dotted lines are contours of constant rate of shape change, with the rate increasing away from the $r_1 = r_2$ line.}
\label{fig:projection}
\end{figure}

However, it is unlikely that precisely balanced growth will occur in a real community. The length of the second component $\mathbf q$ measures the extent to which growth is unbalanced. It represents change in relative abundance (shape) and is proportional to the among-species standard deviation of mean proportional growth rates (Equation \ref{eq:sdgrowth}). The orthogonality of these two components implies that the proposed measure of rate of shape change is distinct from measures of biodiversity trends such as the Living Planet Index \citep{Loh05}, the UK Wild Bird Indicators \citep{Buckland11}, and the Watchlist Indicator \citep{Burns13}.

Two further quantities can be obtained from Figure \ref{fig:projection}. First, the total amount of change in a community is the norm $\| \mathbf r \|$ of the vector $\mathbf r$. This norm is zero if the community is at equilibrium, and positive otherwise. I refer to it as the ``activity'' level of the community (with dimensions time$^{-1}$). To make comparisons among communities, the scaled activity $n^{-1/2} \| \mathbf r \|$ is useful because its expected value does not depend on the number of species sampled. Second, $\cos^2 \theta = (\| \mathbf u \| / \| \mathbf r \|)^2$ (Figure \ref{fig:projection}) measures the proportion of change that is size change. The value of $\cos^2 \theta$ is a useful way of comparing the dynamics of communities because it is dimensionless, and does not depend on the number of species in the community. In terms of linear models, $\cos^2 \theta$ is the coefficient of determination for a one-dimensional linear model containing only a mean term \citep[p. 410]{Saville91}.

\section{Properties of growth space}

The geometry of growth space makes it easy to visualize some properties of the rate of shape change (Figure \ref{fig:projection}), and leads to important theoretical results. As already mentioned, the rate of shape change is zero along the line $r_1=r_2 = \ldots = r_n$ (Figure \ref{fig:projection}, solid line). Thus, the community can increase or decrease in total abundance without shape change, by remaining on this line. Lines parallel to the $r_1=r_2 = \ldots = r_n$ line (Figure \ref{fig:projection}, dotted lines) have constant rate of shape change, increasing away from the $r_1=r_2 = \ldots = r_n$ line in each direction. At the origin ($r_1=r_2 = \ldots = r_n = 0$, Figure \ref{fig:projection}, black dot), corresponding to all the equilibria of the system, there is no shape change and no growth. The bold half-lines in Figure \ref{fig:projection} (excluding the origin) consist of points for which some proportional growth rates are zero and others are negative. If all species had positive abundance for such a community, and growth rates remained at a fixed point on the bold half-lines, the community would asymptotically approach a boundary equilibrium at which the species with negative growth rates were absent. Since such points do not lie on the $r_1 = r_2 = \ldots = r_n$ line, the rate of shape change would not go to zero as the equilibrium was approached (although such situations are unlikely to be common, and species with very low abundance would eventually go extinct, as discussed later).

\R{A47}Similar but more complicated cases may also occur. In a Lotka-Volterra competition model for proportional cover of six components in a coral reef system \citep{ST08}, the abundance of macroalgae and pocilloporid corals is very close to zero from around 25 years onwards (Figure \ref{fig:heronisland}A, green and orange lines respectively). The proportional growth rates of these two components have large but fluctuating negative values (Figure \ref{fig:heronisland}B, green and orange lines respectively), while the proportional growth rates for the other components are much closer to zero. In growth space, the system is approximately moving relatively far from zero in the $(-,-)$ quadrant of a plane on which all components other than macroalgae and pocilloporid corals have zero proportional growth rates. Thus, from around 25 years onwards, the rate of shape change as measured by $s_r$ is large, but fluctuates over time (Figure \ref{fig:heronisland}C, black line). Existing measures of the rate of shape do not have large values for this portion of the model output (Figure \ref{fig:heronisland}C, red, blue, green, purple and orange lines). They behave differently from $s_r$ because they do not measure distance in growth space.

\begin{figure}
\includegraphics[height=14cm]{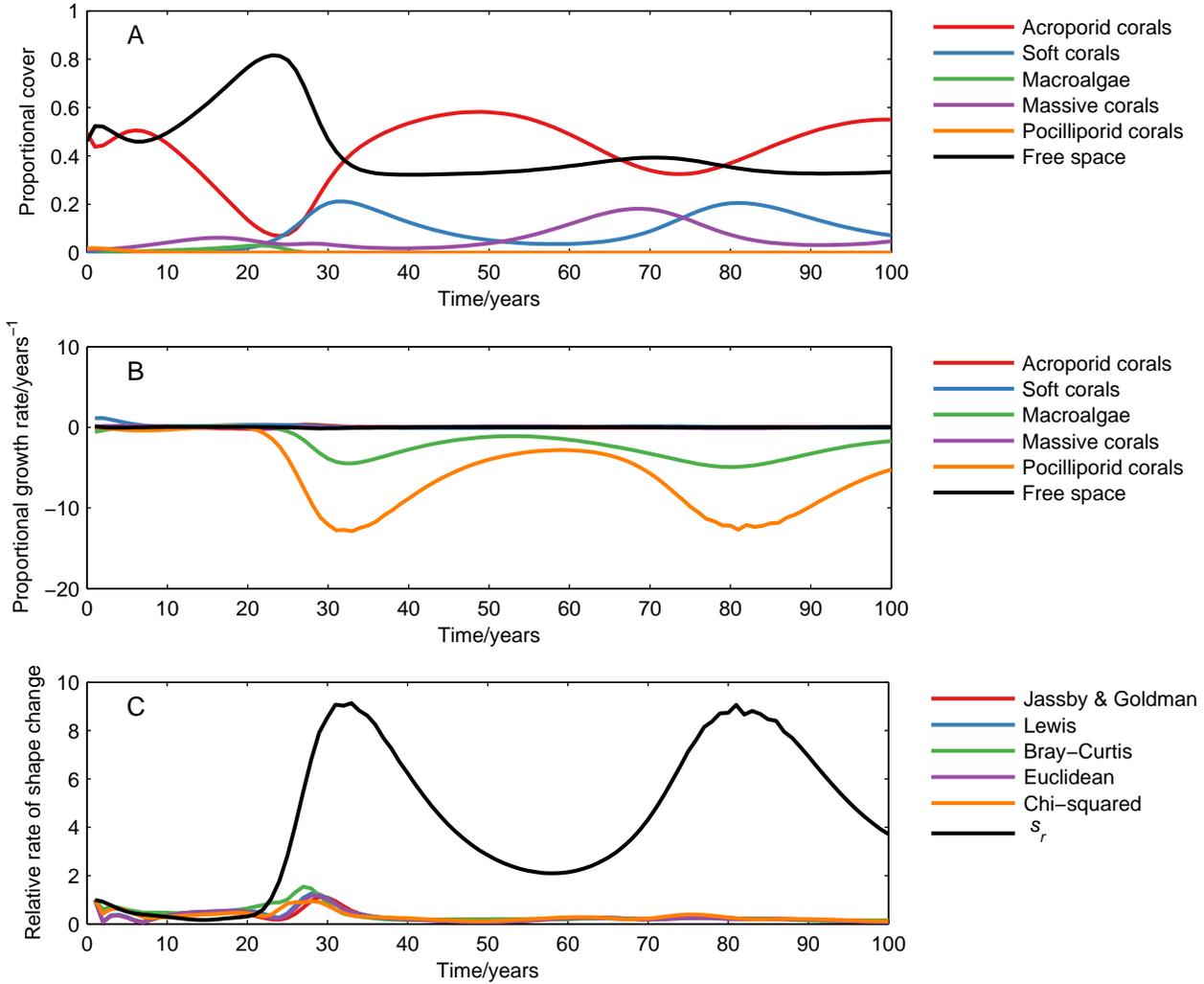}
\caption{A: changes in proportional cover over time in a 6-component Lotka-Volterra competition model for a coral reef system (Heron Island, Protected Crest site). Parameter estimates from Table 2 in \citet{ST08}. B: Mean proportional growth rates (over one-year intervals) against time for the data in A. C: $s_r$ (black line, Equation \ref{eq:sdgrowth}) and five existing measures of rate of shape change \citep[red, blue, green, purple, orange, respectively: ][]{Jassby74, Lewis78, Field82, Foster00, Legendre01}. Each measure is plotted relative to its value in the first time interval.}
\label{fig:heronisland}
\end{figure}

Two important theoretical results follow immediately from the geometry of growth space.\R{A62} First, there is no necessary connection between the typical body size and generation time for species in a community and the rate of shape change. Differences in generation time are likely to be associated with differences in body size and the among-species mean of maximum proportional growth rates \citep[e.g.][p. 230]{Drury73, Tilman88}. It is possible (although not necessarily true) that there are corresponding effects on mean proportional growth rates in the field. Nevertheless, it is only the among-species standard deviation of mean proportional growth rates that determines how fast the relative abundances change. In consequence, there is no reason to assume that shape change \R{A15}must be faster in, for example, a microbial community than in a forest. If shape change is generally faster in microbial than plant communities \citep{Fierer12}, it is not simply because microbes have faster growth rates. Similarly, although there may be a relationship between body size and the rate of shape change, as \citet{Anderson07} suggested in the broader context of succession, it is not a mathematical necessity.

Second, environmental variability \emph{per se} need not affect the rate of shape change. Although such variability may result in movement in growth space, the rate of shape change is only affected if this movement is not parallel to the $r_1=r_2 = \ldots = r_n$ line. In other words, environmental variability only affects the rate of shape change if it affects different species in different ways.\R{A69}

\section{Example: temporal change in a hoverfly assemblage, Leicester, UK}

The data analyzed here are a 30-year record of abundances of hoverflies (Diptera: Syrphidae) in a suburban garden in Leicester, United Kingdom. A Malaise trap was used to catch flying insects from 1 April to 31 October, on the same site every year from 1972 to 2001 \citep[p. 37]{Owen10}. Hoverflies were studied in more detail than other insect groups, and annual numbers for the 14 species caught more than 1000 times in total were reported \citep[p. 88]{Owen10}. Hoverflies in the UK have one, two, or  more than two generations a year \citep[p. 91]{Owen10}, so 30 years is a relatively long time scale for this assemblage. Larvae of different species use different food resources, including plant tissues, decaying organic matter, and aphids \citep[p. 80]{Owen10}, so the assemblage is quite functionally diverse. Large, active species are thought to be trapped less efficiently than small species \citep[p. 84]{Owen10}. \R{A7}Therefore, existing measures of rate of shape change, which are not scaling invariant (Property \ref{property:perturbinvar}), will not be useful descriptions of these data.

\R{A8}Variation in annual counts is a combination of deterministic trends in true abundance, stochastic variation in true abundance (process error) and observation error. I used a state-space model (\ref{ap:estimate}) to describe both process and observation error, and calculated rates of change in size and shape using the expected abundances from this model. The four occasions on which a zero count was recorded can plausibly be treated as observation error \R{A40}based on a Poisson sampling model: a species which was not recorded in the trap sample in a given year is still likely to have been present in the local area.

In Figure \ref{fig:hoverflies}A, the logs of observed counts $c_i+1$ are plotted against year, with one line for each species \R{A41}(note that the $\log (c_i + 1)$ transformation is used only for plotting the data: a Poisson sampling model is used for analysis). The slope of the line segment connecting the abundances of a given species in two successive years approximates the mean proportional growth rate $\tilde{r}_i$ for that species (except in the rare cases where the count was zero). Three important features are immediately obvious. First, there was a general downward trend in abundances, which may be caused by urbanization of surrounding agricultural land and a gradual increase in temperature \citep[p. 229]{Owen10}. Second, there were year-to-year fluctuations of more than two orders of magnitude. Third, these fluctuations often involved simultaneous increases or decreases in many species. Important causes of these fluctuations include warm weather in spring and summer, variation in food supply for species whose larvae feed on aphids, and immigration from surrounding agricultural land \citep[pp. 86-87]{Owen10}.

\begin{figure}
\includegraphics[height=18cm]{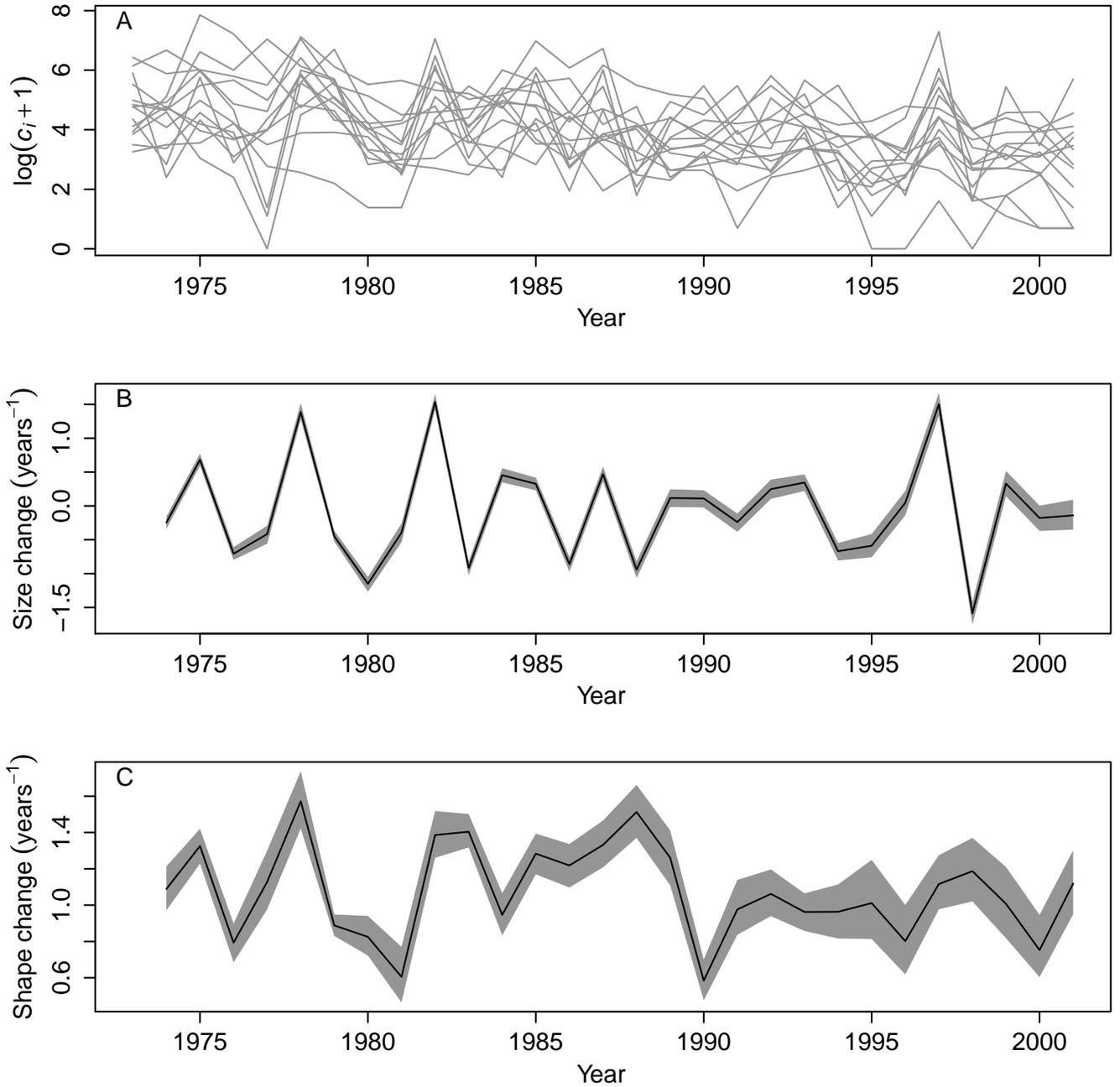}
\caption{Dynamics of the hoverfly assemblage of a Leicester, UK garden. A: log of observed count $c_i+1$ against time for the 14 species that were trapped more than 1000 times (one line per species). B: rate of change in size (among species mean of mean proportional growth rates $\bar{r}$, years$^{-1}$) against time. C: rate of change in shape (among-species standard deviation of mean proportional growth rates, $s_r$, years$^{-1}$) against time. In B and C, the black line is the mean estimate, and the grey area is a 95$\%$ confidence band.}
\label{fig:hoverflies}
\end{figure}

The size component (Figure \ref{fig:hoverflies}B) fluctuated a lot despite the general negative trend in abundances, and there were many intervals in which it was positive (indicating that a typical species increased in abundance over that period). \R{A9}Note that for most of these intervals, the 95$\%$ confidence band for size change did not include zero, so that these short-term fluctuations are likely to be real, rather than a consequence of observation error. Similarly, the most obvious feature of the shape component was its variability from year to year (Figure \ref{fig:hoverflies}C). \R{A59}I will return to the ecological meaning that can be extracted from these patterns after developing methods to deal with colonization and extinction.

\section{Colonization and extinction}

So far, I assumed that abundances are strictly positive (or at least that the underlying expected abundances are positive, if sampling zeros occurred). Even in cases where a complete census has been conducted, one could think of the set of individuals present as being a single realization of a stochastic process with positive expected abundances. \R{A67}Thus, treating absences as sampling zeros will usually be appropriate. Nevertheless, true zeros may occur, with a transition from a true zero to a positive abundance representing colonization and a transition from positive abundance to true zero representing extinction. Perhaps more importantly, the following analysis provides a link to related measures for presence/absence data.

If abundance $x_i$ of some species is zero, the proportional growth rate $(1/x_i) \mathrm{d} x_i / \mathrm{d} t$ is undefined. This has important practical consequences. For example, the Watchlist Indicator \citep{Burns13} is closely related to the rate of change in size (Equation \ref{eq:rbar}) for species of conservation priority. It would be undefined if one or more of the species of interest went extinct, yet such an event would be of even more concern than a decline in abundance without extinction.

Looking at the form of the mean proportional growth rate (Equation \ref{eq:meanrate}) suggests an approach. In the following explanation, I focus on extinction, because colonization is simply the mirror image. Intuitively, suppose that we would like an index based on a weighted sum of two terms, one reflecting changes in abundance measured on a proportional scale, and the other reflecting extinctions. The mean proportional growth rate (Equation \ref{eq:meanrate}) involves $\log x_i(t)$, and \R{A38}$\lim_{x_i(t+\Delta t) \rightarrow 0^+} \log x_i(t+\Delta t) = - \infty$. \R{A39}Choosing a real weight for extinctions is equivalent to adding an arbitrary constant to an observed zero, and it is in principle possible for abundance to be positive but less than this constant. Thus for every real choice of weight given to the extinction component and given initial abundance $x_i(t)$, there is a positive abundance $x_i(t+\Delta t)$ for which the magnitude of the mean proportional growth rate is greater than this weight. In other words, for any finite weight given to extinctions, there will be changes in abundance not involving extinction that count for more than changes involving extinction, which is intuitively unreasonable. Therefore the weight given to an extinction should have greater magnitude than any real number.

Simply calling the weight given to an extinction infinite is not satisfactory. Intuitively, one would like the extinction of a common species to represent a larger change than the extinction of a rare species. Also, one would like the transition from a given non-zero abundance to zero abundance to represent a more negative rate of change if it occurs over a shorter time interval $\Delta t$. Arithmetic operations involving infinite quantities are not defined in the real number field, so these properties will not hold.

There is a a natural solution. Instead of replacing zero abundances by $\lim_{x_i \rightarrow 0^+}$, replace them by a number larger than zero but smaller than all positive real numbers. Such numbers exist in the surreal number field \citep{Conway01}, and their logarithms have magnitude greater than all positive real numbers. The surreal number field contains the real numbers, but has arithmetic operations defined for infinite as well as finite numbers. These operations behave as expected when applied to real numbers. If a species has nonzero abundance at time $t$ but zero abundance at time $t+\Delta t$, define its mean proportional growth rate as $(\log (1/\omega) - \log x_i(t))/\Delta t$, where $1/\omega$ is the simplest number that is greater than zero but smaller than all positive real numbers \citep[p. 12]{Conway01}. Under this definition, the mean proportional growth rate is more negative if $x_i(t)$ is larger, or if $\Delta t$ is smaller, as required.

The size and shape components of the rate of change for a community can then be calculated as described in \ref{ap:ce}. The size component becomes a surreal number with an infinite and a real part:
\begin{equation}
\bar{r}(t, t + \Delta t) = \frac{(k_3-k_2)\psi}{n \Delta t} + \frac{a - b}{n \Delta t},
\label{eq:sizeinf}
\end{equation}
where $k_3$ is the number of colonizations, $k_2$ is the number of extinctions, $\psi = \omega^{1/\omega}$ is the negative of the natural logarithm of $1/\omega$ (and is larger than all positive real numbers), $a$ is the sum of log abundances for all species with nonzero abundance at time $t+\Delta t$, and $b$ is the sum of log abundances for all species with nonzero abundance at time $t$. Equation \ref{eq:sizeinf} reduces to Equation \ref{eq:rbar} if there are no colonizations or extinctions. The coefficient of $\psi$ in Equation \ref{eq:sizeinf} is the among-species mean of a variable taking the values $-1/\Delta t$ for extinctions, $1/\Delta t$ for colonizations, and zero otherwise (\ref{ap:pa}). If the difference between the numbers of colonizations and extinctions is not zero, the magnitude of this term will always be greater than the magnitude of the second term (the real part, proportional to $a-b$). To understand how Equation \ref{eq:sizeinf} works, consider a case where a community loses one species but gains another, whose final abundance is the same as the initial abundance of the lost species. No other species change abundance. Intuitively, there has been no change in size. If the species gained has lower final abundance than the species lost, there has been a reduction in size, but not as large as if a species was lost and no new species, however rare, colonized.

If there are extinctions and colonizations, the shape component also has infinite and real parts:
\begin{equation}
s_r \simeq \frac{ \alpha^{1/2} \psi}{{(n-1)^{1/2}}\Delta t} + \frac{\beta}{{2 [\alpha(n-1)]^{1/2}}\Delta t},
\label{eq:sdinf}
\end{equation}
where 
\begin{equation}
\begin{aligned}
\alpha &= k_2 + k_3 - \frac{1}{n}(k_3-k_2)^2,\\
\beta &= 2\left( \sum_{i \in \mathcal S_2} \log x_i(t) + \sum_{i \in \mathcal S_3} \log x_i(t+ \Delta t) - \frac{1}{n}(a-b)(k_3-k_2) \right),
\end{aligned}
\label{eq:beta}
\end{equation}
and $\mathcal S_2$ and $\mathcal S_3$ are the sets of species going extinct and colonizing, respectively. Equation \ref{eq:sdinf} is approximate in the sense that it ignores terms whose magnitude is less than any positive real number. The coefficient of $\psi$ in Equation \ref{eq:sdinf} is the among-species standard deviation of a variable taking the values $-1/\Delta t$ for extinctions, $1/\Delta t$ for colonizations, and zero otherwise (\ref{ap:pa}). Thus the infinite part will be large if there is a lot of change in the set of species present ($k_2+k_3$ is large), but there is little change in the number of species present ($k_3 - k_2$ is small). The second term in Equation \ref{eq:sdinf} (the real part) will be large if the species going extinct had large abundances at time $t$ and/or the species colonizing had large abundances at time $t+\Delta t$. 

The scaled activity level and the proportion of change that is size change in the presence of colonizations and extinctions can be calculated as described in \ref{ap:ce}. \R{A32}Because standard probability distributions are not defined over the whole of the surreal number field, and there is so far no generally satisfactory definition of an integral for surreal functions \citep{Fornasiero03, Rubinstein-Salzedo14}, parametric statistical analyses of these quantities would be difficult. However, surreal numbers can be ranked unambiguously (\ref{ap:ce}), and so statistics based on ranks of the scaled activity level and of the proportion of change that is size change are straightforward.

If only presence/absence data are available, only the infinite terms in Equations \ref{eq:sizeinf} and \ref{eq:sdinf} can be calculated. The coefficients of these terms are natural measures of size and shape change for presence/absence data, \R{A66}and it is not necessary to make any explicit use of the surreal number system to calculate them. \ref{ap:pa} summarizes the calculation of scaled activity and proportion of change that is size change from presence-absence data. The measures of rate of succession from presence/absence data defined by \citet{Anderson07} are based on rescalings of $k_2 + k_3$, which is proportional to the squared activity level from presence/absence data (\ref{ap:pa}).

\section{Example: the higher plant community on Surtsey}

I calculated the rate of shape change $s_r$ (Equation \ref{eq:sdinf}) and the among-species mean proportional growth rate $\bar{r}$ (Equation \ref{eq:sizeinf}) for the higher plant community on the volcanic island of Surtsey, using data and background information from \citet{Fridriksson89}. The island was formed in 1963, and its higher plant community has been surveyed annually since the first plant was found in 1965, initially by complete census, and later by quadrat and transect samples \citep{Fridriksson87}. Relatively few plant species are found on the island, because of the scarcity of water, low nutrient levels, salt spray, sand abrasion, and wave action \citep{Fridriksson89}. However, soil formation has been fairly rapid, due to organic matter being washed ashore \citep{Fridriksson87}. I used the data for 1965 to 1981 in Table 1 of \citet{Fridriksson89}. Data from later years were excluded because there was no count for the most abundant species after 1981. By 1981, 22 taxa had been observed, including one that was identified only to genus and a category of unidentified plants. There were many zero values.

It is plausible that abundances were exactly known, and therefore that zeros represent true absences (at least during the early part of the study, when complete censuses were made). The counts themselves can only take natural number values, so cannot be differentiable functions of time. Nevertheless, it is reasonable to treat expected abundances conditional on the data (which coincide with the counts at sampling times) as differentiable functions of time, and therefore to calculate the rates of change in size and shape using Equations \ref{eq:sizeinf} and \ref{eq:sdinf} respectively.

Figure \ref{fig:Surtseysurreal}A shows the logs of observed counts $c_i + 1$ against year, with the same interpretation as Figure \ref{fig:hoverflies}A when abundances were nonzero \R{A42}(again, the $\log(c_i + 1)$ transformation is used only for plotting, not for analysis). However, colonization and extinction were major features of these data. Species arrived at different times, with dispersal by birds probably the most common route \citep{Fridriksson87}. Of these arrivals, most did not persist or remained at low abundance, some increased in abundance and then decreased again, and one (the sandwort \emph{Honckenya peploides}, Figure \ref{fig:Surtseysurreal}A, black line) became numerically dominant.

\begin{figure}
\includegraphics[height=18cm]{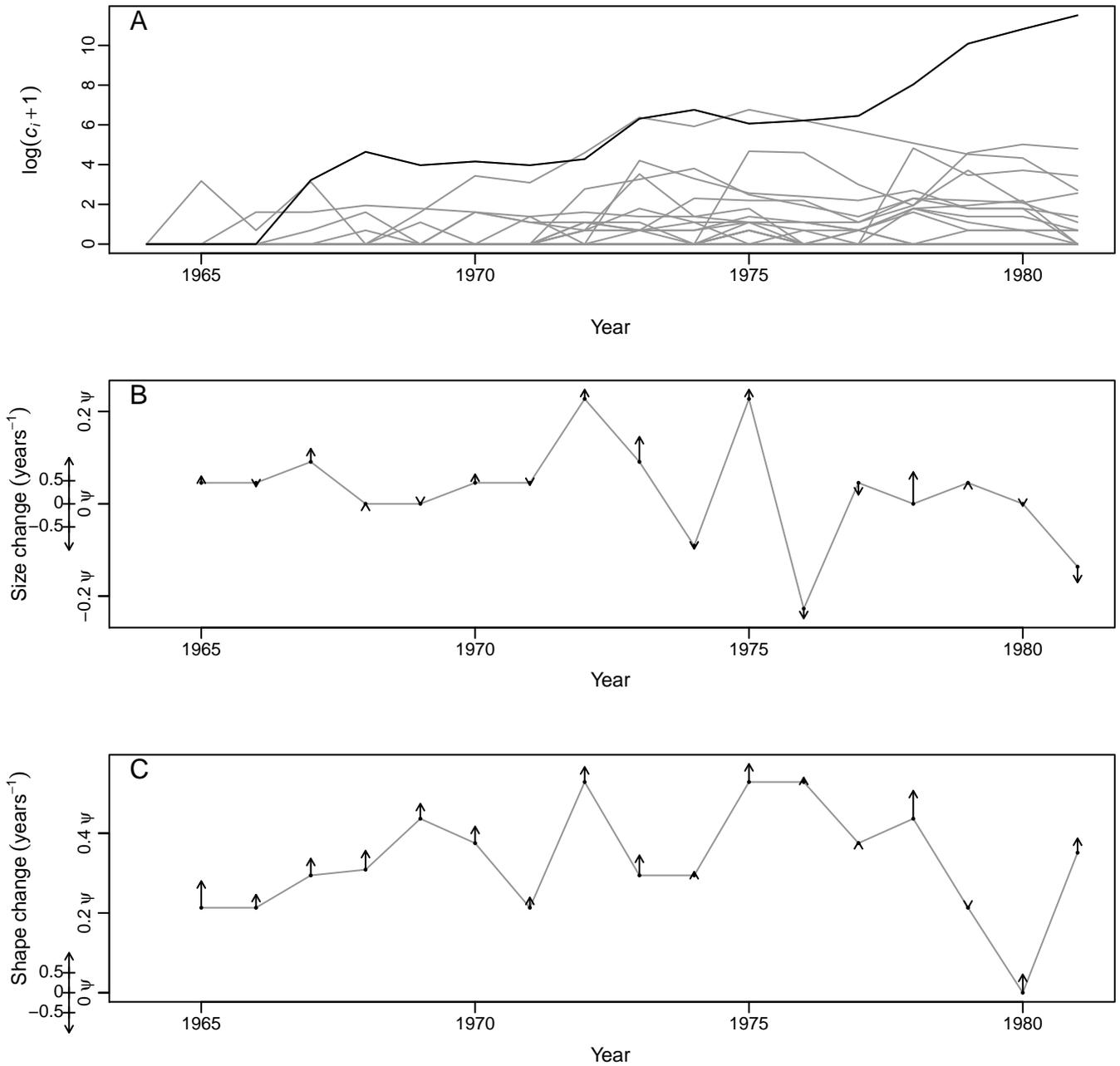}
\caption{Higher plant community on Surtsey, 1965 to 1981. A: log of observed count $c_i+1$ against time for each species. Black line is the dominant species \emph{Honckenya peploides}, grey lines are other species. B: rate of change in size (among-species mean of mean proportional growth rates $\bar{r}$, years$^{-1}$) against time. Dots represent the infinite part (measured in units of $\psi$, main scale) and lengths of arrows represent the real part (secondary scale, magnified by $0.1\psi$). C: rate of change in shape (among-species standard deviation of mean proportional growth rates, $s_r$, years$^{-1}$) against time. Infinite and real parts represented as in B.}
\label{fig:Surtseysurreal}
\end{figure}

Equation \ref{eq:sizeinf} gives the among-species mean $\bar{r}$ of mean proportional growth rates, including colonizations and extinctions. In Figure \ref{fig:Surtseysurreal}B, the dots connected by lines represent the infinite part of $\bar{r}$, while the lengths of the arrows represent the real part (on a separate scale, magnified by a factor of $0.1\psi$). The most striking feature of the temporal pattern in $\bar{r}$ is that it does not \R{A63}increase systematically over time, despite changes in the environment that might be expected to favour plant growth. For example, over the last two time periods, the abundance of the dominant species \emph{Honckenya peploides}, and the total number of individuals, were increasing rapidly. Biomass production and soil fertility, driven by inputs of fish waste and guano from seabirds, were also increasing \citep{Fridriksson89}. Furthermore, as the abundance of \emph{H. peploides} increased, dense stands of this plant modified the environment in ways that increased the germination and survival rates of two other species \citep{Fridriksson87}, an example of the type of positive interaction thought to be important in facilitating succession \citep{Bertness94}. Nevertheless, both the infinite and real parts of $\bar{r}$ were lower in the last two time periods than in much of the first half of the series. In other words, for a typical species, the environment was less favourable at the end than the start of the time series.

The rate of change in shape, $s_r$ (Equation \ref{eq:sdinf}) had relatively high values in the middle of the study period (Figure \ref{fig:Surtseysurreal}C). Inspection of Figure \ref{fig:Surtseysurreal}A shows that in the middle part of the series, there were relatively large numbers of colonizations and extinctions (values of $\log (c_i + 1)$ changing to and from zero). Simultaneous colonizations and extinctions tend to make the infinite part of $s_r$ large (Figure \ref{fig:Surtseysurreal}C, dots connected by lines). There was little obvious temporal pattern in the real part of $s_r$ (Figure \ref{fig:Surtseysurreal}C, arrows, on a separate scale magnified by a factor of $0.1\psi$). Figure \ref{fig:Surtseysurreal}A shows that at most time points, there was substantial among-species variability in changes in abundance. At least over this time scale, the idea that rates of shape change are \R{A54}decreasing over time is not supported. In retrospect this is unsurprising, because as Figures \ref{fig:succession}C and D show, it is easy to construct cases in which rates of change are not decreasing over time in the short term.\R{A60}

The amount of change in the Surtsey plant community was almost always not commensurate with that in the Leicester hoverfly assemblage, in the sense that colonization and extinction represent infinitely larger changes than any change in abundance not involving such events. Nevertheless, the proportion of change that is size change ($\cos^2 \theta$) can be directly compared (Figure \ref{fig:hovervsSurtseybox}). This proportion was strikingly higher for the Leicester hoverfly assemblage (median 0.22, lower quartile 0.07, upper quartile 0.35) than for the Surtsey higher plant community (median 0.05, lower quartile 0.02, upper quartile 0.09), although a hypothesis test is not appropriate because no hypothesis was proposed before looking at the data. In general, one might expect phylogenetically restricted assemblages of organisms such as the Leicester hoverflies to show larger amounts of size change than more diverse assemblages of organisms such as the Surtsey higher plants. This is because if there is strong niche conservatism \citep{Wiens10}, closely-related species will tend to have similar proportional growth rates under most sets of environmental conditions, and when proportional growth rates are similar, $\cos^2 \theta$ will tend to be large (Figure \ref{fig:projection}). Conversely, in a diverse assemblage, it is unlikely that all species will have similar proportional growth rates under most sets of environmental conditions, and $\cos^2 \theta$ will generally be small. Thus, in diverse assemblages, indices of size change such as the Living Planet Index may be relatively uninformative about community dynamics. \R{A36}I will not attempt to test this hypothesis here, because such a test will require data from a wide range of assemblages of differing taxonomic diversity, \R{A58}but it is closely connected to the idea that compensatory dynamics (predominantly positive interspecific covariances in abundance) will occur in assemblages where many species are functionally equivalent \citep{Houlahan07}.

\begin{figure}
\includegraphics[height=18cm]{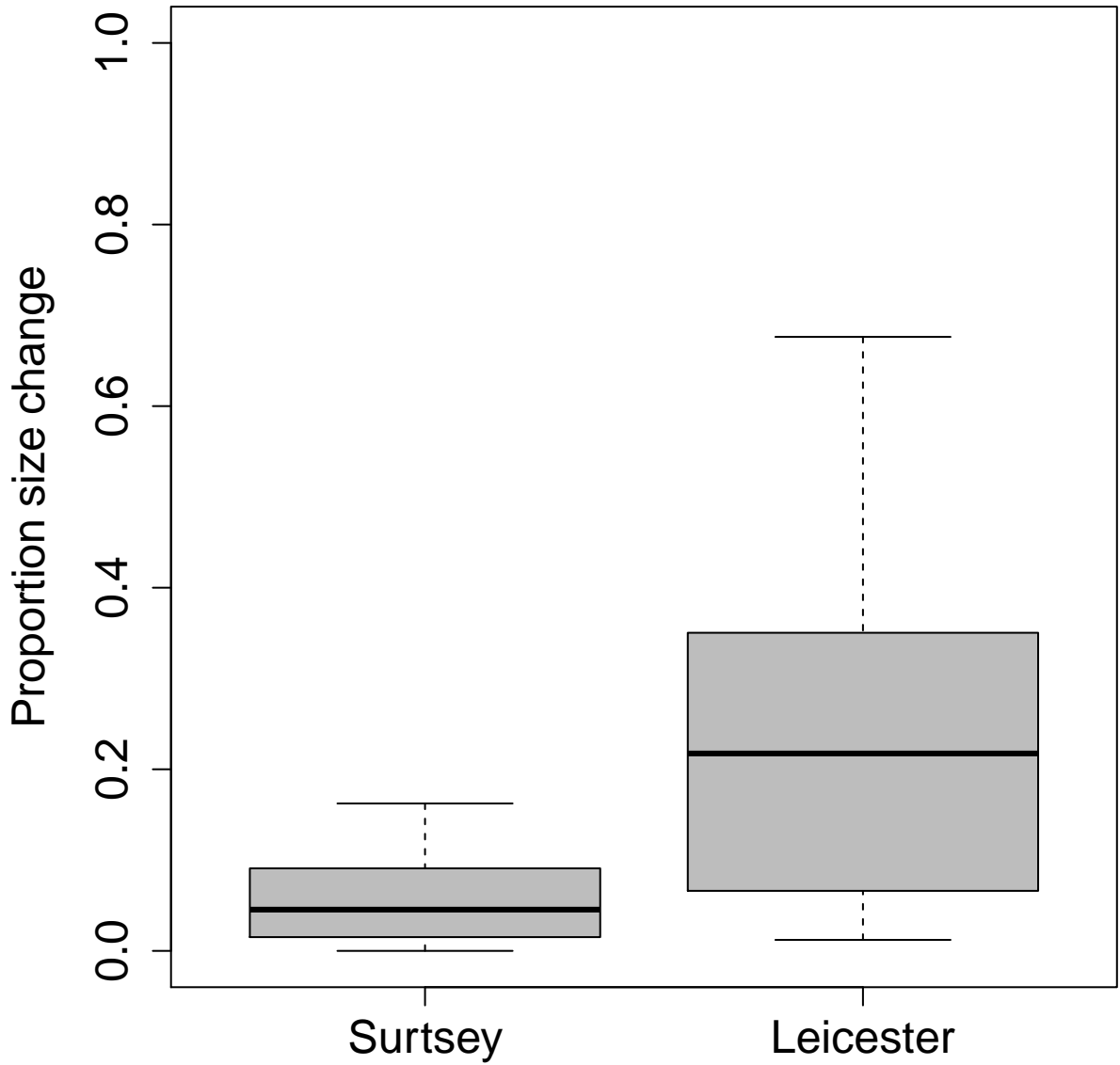}
\caption{Boxplots of proportion of change that is size change ($\cos^2 \theta$) for higher plants on Surtsey (17 annual time intervals: infinitesimal parts not shown) and hoverflies in a Leicester, UK, garden (29 annual time intervals, bootstrap mean values plotted). Thick line: median. Box extends from the lower to the upper hinges, which are approximately the lower and upper quartiles. Whiskers extend from the ends of the box to the furthest observations no more than 1.5 times the width of the box away.}
\label{fig:hovervsSurtseybox}
\end{figure}

In summary, the Surtsey higher plant data illustrate how patterns in size change may not accord with simple perceptions of the favourability of an environment, and do not support the idea that rates of shape change will generally be decreasing over time. Change in shape was much more important in the Surtsey higher plant assemblage than in the Leicester hoverfly assemblage.

\section{Discussion}

Here, I have derived a measure of the rate of shape change (in the sense of Definition \ref{def:ratesuc}) that is consistent with the basic principles of population dynamics and takes an organism-centred view. \R{A75}The most important result is that if two abundance trajectories have the same proportional growth rates at corresponding time points, they are equivalent from an organism's point of view, and should therefore have the same rate of shape change. It may seem surprising that equal proportional changes make equal contributions to the rate of shape change (and to indices of biodiversity change such as the Living Planet Index) whether they occur in rare or common species. Thus, shape change could be rapid in a community whose dominant species do not change in abundance. Population biologists have accepted the idea that the proportional scale is appropriate for analyses of single-species dynamics \citep{Gaston_1893}. A reluctance to apply the same logic at the community level \citep[e.g.][]{Lewis78, Field82, Legendre01} might imply that some variable other than the composition of the community is in fact of primary interest. Such variables might include total biomass or its distribution among species, nutrients, phylogenetic diversity, or structural complexity. These are all important aspects of a broader concept of succession. Nevertheless, one should not object to a measure of the rate of shape change on the grounds that it does not describe the rate of change of some other variable. The properties studied by community ecologists (such as abundances) and the properties studied by ecosystem ecologists (such as biomass, nutrients, and structural complexity) may behave in quite different ways.

\R{A50}Other interesting measures of community dynamics could be obtained under different invariance principles. For example, one might argue that for a single population, all points on the same logistic growth curve are equivalent in the long term. Under this argument, one would want a measure (such as the carrying capacity) that indexes long-term behaviour rather than short-term dynamics. There are two key differences from the approach taken in this manuscript. First, situations that will be equivalent in the long term are not equivalent from an organism's point of view. Instead, they might represent equivalence in properties such as long-term conservation value. Second, such an approach requires a parameterized model of community dynamics.  There are fairly simple and general stochastic models of community dynamics \R{A61}\citep[e.g.][]{Ives03,Mutshinda09,Hampton13}, but they will not apply to all situations. The model would (I think) have to be parameterized, because it would be difficult to find a useful measure that was invariant over the space spanned by all members of a family of models, rather than the space spanned by a particular member of that family. There is a similar issue with measures of physiological time. It is easy to find a physiological time scale on which some process such as organism growth has a constant proportional rate, but impossible to find a universal time scale on which all relevant processes will have this property \R{A51}\citep{vanStraalen83}. In contrast, the approach used here does not rely on any parameterized model of community dynamics (other than in the trivial sense that because we do not usually know abundances exactly, we have to use a model to estimate proportional growth rates).

\R{A57}There may be a deeper connection between measures of diversity and measures of the rate of shape change. Diversity indices are scalar summaries of a relative abundance vector, while measures of the rate of shape change are scalar summaries of changes in relative abundance vectors. Most diversity indices can be expressed in terms of Hill numbers, a family of functions indexed by the weight they give to evenness \citep{Hill73,Jost06}. There are generalizations that account for functional or taxonomic similarity \citep{Leinster12,Chiu14}. The obvious place to look for a connection between the concepts of diversity and rate of shape change is beta diversities, which measure differences in diversity between communities. Beta diversities based on Hill numbers unify a wide range of measures of community similarity \citep{Jost07}, including the Jaccard and Morisita-Horn indices used by \citet{Dornelas14} to measure temporal change in communities. They are not scaling invariant (M. Spencer, unpublished results, unless no weight is given to evenness, in which they only measure colonization and extinction), but the possibility of a less obvious connection remains.

Traditionally, multivariate community ecology has focussed on variation among sites \citep[e.g.][]{Gauch82}. Even when samples form a time series, the usual approach is to analyze them in ``abundance space'', where each observation represents the abundance vector at a single time \citep[e.g.][]{Warwick02, Lear08}. This is implicitly a static view of communities, in which abundance is directly determined by environmental conditions. It is in conflict with the dominant view in population dynamics that environmental conditions act on the proportional growth rate of a species, rather than directly on its abundance \citep{Birch53, Hutchinson57, Maguire73}. \R{A31}In contrast, the decomposition of patterns of change into size and shape components gives an ecologically meaningful low-dimensional representation of dynamics, which may offer a useful alternative to traditional ordination. Furthermore, it can be extended to deal with colonizations and extinctions, or to data on presence/absence alone. \R{A65}Related work includes the regression method used by \citet{Ives95} to study the responses of communities to long-term directional perturbations, although this focusses on responses of mean abundance to environmental change, rather than on decomposing empirical patterns of proportional growth rates.

What could measures of size and shape change be used for? One obvious idea is to quantify global patterns in community change. There is evidence of global variation in size change in vertebrate assemblages since 1970 \citep{Collen09,Collen13}. For example, among terrestrial species, tropical populations appear to be declining while temperate populations are increasing. For marine species, populations in the Southern and Indian Oceans appear to be declining, while populations in the Pacific and Atlantic are relatively stable. \R{A17}Thus, there is clearly shape change as well as size change at the global scale, and some of this change can be explained by simple geographical differences. Similarly, among insects there was an overall 45$\%$ decline in abundance between 1970 and 2010, but declines were less severe in Lepidoptera than in other insects \citep{Dirzo14}. The geometric approach taken here shows that we can think of the Living Planet Index and related indices as being based on a one-dimensional linear model \citep[e.g.][chapter 5]{Saville91}. \R{A43}Rather than fitting separate one-dimensional models for subgroups, as is currently done \citep{Collen13}, it would be productive to fit a single linear model with an overall size change pattern, differences between groups (e.g. tropical and temperate populations) represented as orthogonal contrasts \citep[e.g.][section 7.3]{Saville91}, and a residual shape change component. This will allow quantification of the amount of change that can be explained by simple differences such as tropical vs temperate, and the amount of unexplained but ecologically meaningful shape change. The fact that abundances do not need to be measured in the same units for different species moves the task of studying such global patterns from impossible to merely very difficult. Furthermore, it may be possible to calculate the analogous measures for presence/absence data in cases where abundances cannot be reliably estimated. For example, it would be possible to calculate size and shape change indices from historical records and fossil communities, and hence study global patterns of change over very long time scales.

In the two communities I analyzed, there was a striking difference in the proportion of change that was size change. It therefore seems likely that size-based indices such as the Living Planet Index will be much more informative about some communities than others. I hypothesized that phylogenetically-restricted assemblages might show mainly size change, while diverse assemblages might show mainly shape change. For example, hydrothermal vent communities occur are characterized by evolutionary radiations of a fairly small group of taxa \citep{VGS+02}. It seems plausible that such communities might show a relatively high proportion of size change. In contrast, one might expect tropical forests and coral reefs, with high phylogenetic diversity, to show a high proportion of shape change.

\R{A71}In conclusion, shape change is distinct from but complementary to the property measured by indices such as the Living Planet Index. The rate of shape change should be measured in a way that is consistent with population dynamics. The measures I have derived satisfy this requirement. The key to understanding their properties is the geometry of growth space, whose axes are the proportional growth rates of each species involved.

\section*{Acknowledgements}

I am grateful to six anonymous reviewers for constructive criticism of earlier versions of this manuscript. Much of this work was done while MS was a visitor at the Department of Mathematics and Statistics, University of Otago, and a Sabbatical Fellow at the National Institute for Mathematical and Biological Synthesis, an Institute sponsored by the National Science Foundation, the U.S. Department of Homeland Security, and the U.S. Department of Agriculture through NSF Award \#EF-0832858, with additional support from The University of Tennessee, Knoxville. Further funding was provided by the School of Environmental Sciences, University of Liverpool, and by NERC grant NE/K00297X/1 awarded to MS. Discussions with Kath Allen, David Bryant, Damian Clancy, Jennifer Cooper, Lev Ginzburg, Lou Gross, David Gurarie, Axel Rossberg, and Kamila \.Zychaluk contributed to the development of the ideas described here.

\clearpage
\begin{appendices}
\gdef\thesection{Appendix \Alph{section}}
\numberwithin{equation}{section}
\renewcommand{\theequation}{\Alph{section}.\arabic{equation}}

\section{Invariance under exponential growth and scaling invariance}
\label{ap:invariance}

\begin{theorem} A measure of the rate of shape change satisfying Property \ref{property:perturbinvar} is invariant under exponential growth, in the sense that it will have the same value for any pairs of vectors $\mathbf x$, $\mathbf y$ generated by a given initial abundance vector $\mathbf x_0$, a constant vector of species-specific proportional growth rates $\mathbf r$ and a constant time interval $\Delta t$. In other words,
\begin{equation*}
g(\mathbf x, e^{\mathbf R \Delta t} \mathbf x) = g(e^{\mathbf R t} \mathbf x, e^{\mathbf R(t+ \Delta t)} \mathbf x).
\end{equation*}
\end{theorem}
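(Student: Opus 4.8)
The plan is to exhibit the correct scaling matrix and then invoke Property \ref{property:perturbinvar} directly. Under exponential growth the species-specific proportional rates are constant and can be collected in the diagonal matrix $\mathbf R = \text{diag}(r_1, \ldots, r_n)$, so that abundance at any time $s$ is obtained from the initial vector by $\mathbf x(s) = e^{\mathbf R s} \mathbf x$. First I would observe that the right-hand side of the claimed identity is exactly the left-hand side with both arguments premultiplied by $e^{\mathbf R t}$: because $\mathbf R$ is diagonal its exponentials commute and add, so $e^{\mathbf R t} e^{\mathbf R \Delta t} = e^{\mathbf R(t + \Delta t)}$, giving $e^{\mathbf R t}\mathbf x$ and $e^{\mathbf R t}(e^{\mathbf R \Delta t}\mathbf x) = e^{\mathbf R(t+\Delta t)}\mathbf x$ as the two arguments on the right.

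The key step is then to set $\mathbf M = e^{\mathbf R t}$ and verify that it is an admissible scaling matrix in the sense of Property \ref{property:perturbinvar}: it is diagonal with zero off-diagonal entries, and its diagonal entries $e^{r_i t}$ are strictly positive for every real $r_i$ and every $t$, so even species with negative growth rates cause no difficulty. Applying Equation \ref{eq:perturbinvar} with this $\mathbf M$ to the pair $(\mathbf x, e^{\mathbf R \Delta t}\mathbf x)$ yields $g(\mathbf x, e^{\mathbf R \Delta t}\mathbf x) = g(e^{\mathbf R t}\mathbf x, e^{\mathbf R(t+\Delta t)}\mathbf x)$, which is precisely the desired conclusion.

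I do not expect a substantial obstacle: the entire content is in recognizing that sliding the observation window forward by time $t$ along an exponentially growing trajectory acts on the two abundance vectors exactly as the scaling $\mathbf M = e^{\mathbf R t}$, the situation depicted in Figure \ref{fig:perturbation}. The only point worth stating with care is the verification that $e^{\mathbf R t}$ has strictly positive diagonal entries, since this is what licenses the use of Property \ref{property:perturbinvar}, whose hypothesis requires $m_i > 0$.
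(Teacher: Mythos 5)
Your proposal is correct and follows exactly the paper's own argument: both choose $\mathbf M = e^{\mathbf R t}$ and apply Property \ref{property:perturbinvar} to the pair $(\mathbf x, e^{\mathbf R \Delta t}\mathbf x)$. Your explicit check that the diagonal entries $e^{r_i t}$ are positive is a small point of added care that the paper leaves implicit.
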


\begin{proof}
Property \ref{property:perturbinvar} is that
\begin{equation*}
g(\mathbf x, \mathbf y) = g(\mathbf M \mathbf x, \mathbf M \mathbf y)
\end{equation*}
for all $n \times n$ matrices $\mathbf M$ with positive diagonal elements and zero off-diagonal elements. Choose $\mathbf x= \mathbf x_0$, $\mathbf y = e^{\mathbf R \Delta t} \mathbf x$ where $\mathbf R = \text{diag}(\mathbf r)$, and $\mathbf M = e^{\mathbf R t}$ for any time $t$. Then by Property \ref{property:perturbinvar},
\begin{equation*}
g(\mathbf x_0, e^{\mathbf R \Delta t} \mathbf x_0) = g(e^{\mathbf R t} \mathbf x_0, e^{\mathbf R(t+ \Delta t)} \mathbf x_0)
\end{equation*}
for any $t$.
\end{proof}

\begin{theorem}
Invariance under exponential growth is necessary for Property \ref{property:perturbinvar}. 
\end{theorem}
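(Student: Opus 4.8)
The plan is to read this statement off the previous theorem rather than to prove anything genuinely new. The phrase ``invariance under exponential growth is necessary for Property~\ref{property:perturbinvar}'' is, by the logical meaning of a necessary condition, the assertion that Property~\ref{property:perturbinvar} implies invariance under exponential growth. That implication is exactly what the preceding theorem established: any $g$ obeying $g(\mathbf x,\mathbf y)=g(\mathbf M\mathbf x,\mathbf M\mathbf y)$ for all positive diagonal $\mathbf M$ returns a single value for every pair generated by a common $(\mathbf x_0,\mathbf r,\Delta t)$. So my first move is simply to invoke that result; the necessity claim is its restatement in the ``necessary condition'' idiom.

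If I wanted the argument to be self-contained I would instead prove the contrapositive directly, which is the form actually used in the main text to disqualify existing measures. Suppose $g$ is \emph{not} invariant under exponential growth, so that for some initial vector and growth rates there are times $t_1\neq t_2$ with
\begin{equation*}
g\bigl(e^{\mathbf R t_1}\mathbf x_0,\,e^{\mathbf R (t_1+\Delta t)}\mathbf x_0\bigr)\neq g\bigl(e^{\mathbf R t_2}\mathbf x_0,\,e^{\mathbf R (t_2+\Delta t)}\mathbf x_0\bigr).
\end{equation*}
Taking the diagonal scaling $\mathbf M=e^{\mathbf R(t_2-t_1)}$, which has positive diagonal entries, one checks (using that diagonal matrices commute and exponents add) that $\mathbf M$ carries the pair at $t_1$ exactly onto the pair at $t_2$. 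Property~\ref{property:perturbinvar} would then force the two values equal, a contradiction. Hence failure of exponential-growth invariance rules out Property~\ref{property:perturbinvar}, which is precisely what necessity asserts.

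The only point that needs care here is expository, not mathematical: making clear that this theorem and the previous one encode the same forward implication in two different idioms, and that ``necessary'' is meant in the strict logical sense. I would therefore add a sentence stressing that the converse is false---exponential-growth invariance does not entail scaling invariance---so that the pair of theorems correctly supports the main text's description of this as a necessary but \emph{not} sufficient condition, and so that the contrapositive reading (a measure failing to be constant over time under exponential growth cannot be scaling invariant) is available as the practical test. I expect no genuine obstacle or calculation beyond the one already carried out in the previous proof.
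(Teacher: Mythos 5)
Your proposal is correct and takes essentially the same route as the paper: the paper also argues by contraposition, supposing $g(\mathbf x, \mathbf y) \neq g(\mathbf y, \mathbf z)$ for consecutive pairs $\mathbf y = e^{\mathbf R \Delta t}\mathbf x$, $\mathbf z = e^{\mathbf R \Delta t}\mathbf y$ along an exponential trajectory, and noting that $\mathbf M = e^{\mathbf R \Delta t}$ then witnesses the failure of Property \ref{property:perturbinvar}. Your choice $\mathbf M = e^{\mathbf R(t_2 - t_1)}$ is the same argument stated for arbitrary (not necessarily consecutive) times, and your opening observation that the theorem is just the ``necessary condition'' restatement of the preceding one is consistent with how the paper treats the pair of results.
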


\begin{proof}
Suppose that there exist a set of proportional growth rates $\mathbf r$ and a time interval $\Delta t$ for which the measure is not invariant. In other words, for some initial vector $\mathbf x$, $\mathbf y = e^{\mathbf R \Delta t} \mathbf x$ and $\mathbf z = e^{\mathbf R \Delta t} \mathbf y$, but $g(\mathbf x, \mathbf y) \neq g(\mathbf y, \mathbf z)$. Then for $\mathbf M = e^{\mathbf R \Delta t}$, Property \ref{property:perturbinvar} does not hold.
\end{proof}

\begin{theorem}
Invariance under exponential growth is not sufficient for Property \ref{property:perturbinvar}. For example, a measure which is invariant under exponential growth from any given initial conditions, but with a different value for each initial condition, does not satisfy Property \ref{property:perturbinvar}. 
\end{theorem}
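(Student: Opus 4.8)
The plan is to refute sufficiency by exhibiting a single function $g$ that is invariant under exponential growth yet fails Property \ref{property:perturbinvar}; since the claim is one of non-implication, one counterexample suffices. First I would put the two invariances in a common language. Writing $\boldsymbol\xi = \log \mathbf x$ and $\boldsymbol\eta = \log \mathbf y$ componentwise, a pair $(\mathbf x, \mathbf y)$ of strictly positive vectors lies on a unique exponential trajectory, whose direction in log-coordinates is $\mathbf s = \boldsymbol\eta - \boldsymbol\xi = \mathbf r\,\Delta t$. By the two preceding theorems, invariance under exponential growth says exactly that $g$ is unchanged when the pair is slid along this trajectory, i.e. under $\boldsymbol\xi \mapsto \boldsymbol\xi + c\,\mathbf s$, $\boldsymbol\eta \mapsto \boldsymbol\eta + c\,\mathbf s$ for every real $c$ with $\mathbf s$ held fixed. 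Property \ref{property:perturbinvar}, by contrast, requires invariance under $\boldsymbol\xi \mapsto \boldsymbol\xi + \boldsymbol\mu$, $\boldsymbol\eta \mapsto \boldsymbol\eta + \boldsymbol\mu$ for an \emph{arbitrary} shift $\boldsymbol\mu = \log \mathbf m$, since a positive diagonal $\mathbf M$ commutes with $e^{\mathbf R \Delta t}$ and hence carries the trajectory with initial condition $\mathbf x_0$ to the trajectory with the same $\mathbf r$ but initial condition $\mathbf M \mathbf x_0$.

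The key conceptual point is then the disparity between these two groups of transformations: exponential-growth invariance is invariance under the one-parameter family $\{c\,\mathbf s\}$ (which even depends on the pair, through $\mathbf s$), whereas scaling invariance is invariance under the full $n$-parameter family of shifts $\boldsymbol\mu$. So I would choose $g$ to be constant along each line $\mathbb{R}\,\mathbf s$ but genuinely sensitive to the position of $\boldsymbol\xi$ transverse to that line. Concretely, set $g(\mathbf x, \mathbf y) = \| \Pi_{\mathbf s^{\perp}} \boldsymbol\xi \|$, the Euclidean norm of the orthogonal projection of $\log \mathbf x$ onto the complement of $\mathbf s$, with the convention $g = 0$ when $\mathbf x = \mathbf y$ (so that $\mathbf s = \mathbf 0$ and the direction is undefined). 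Because $\mathbf s$ is fixed along a trajectory and $\boldsymbol\xi$ moves parallel to $\mathbf r \propto \mathbf s$, the projection $\Pi_{\mathbf s^{\perp}} \boldsymbol\xi$ is constant; thus $g$ is invariant under exponential growth, and its value labels the trajectory, equivalently the initial condition, rather than the point on it.

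It remains to witness the failure of Property \ref{property:perturbinvar}. Taking $n = 2$, $\mathbf x = (1,1)^{\prime}$, $\mathbf y = (e,1)^{\prime}$ gives $\mathbf s = (1,0)^{\prime}$ and $\boldsymbol\xi = (0,0)^{\prime}$, so $g(\mathbf x,\mathbf y)=0$; applying $\mathbf M = \text{diag}(1,e)$ leaves $\mathbf s$ unchanged but sends $\boldsymbol\xi$ to $(0,1)^{\prime}$, whose component transverse to $\mathbf s$ has norm $1$, so $g(\mathbf M\mathbf x,\mathbf M\mathbf y)=1 \neq g(\mathbf x,\mathbf y)$. Hence $g$ is invariant under exponential growth but not scaling invariant, establishing insufficiency. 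The step I expect to need the most care -- and the main obstacle -- is confirming that the transverse dependence genuinely survives scaling, rather than collapsing to a function of $\mathbf r$ alone, which by the result of \ref{ap:maxinvar} would make $g$ scaling invariant after all; the explicit pair above is included precisely to rule this out. The degenerate case $\mathbf x = \mathbf y$, where $\mathbf s^{\perp}$ is undefined, is disposed of by the convention $g = 0$ and plays no role in the witnessing pair.
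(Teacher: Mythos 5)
Your proposal is correct, and it is more constructive than the paper's own argument. The paper's proof is schematic: it takes as given a measure $g$ of the kind described in the theorem statement (constant along each exponential trajectory, but taking different values on different trajectories), picks $\mathbf x_1$ not on the trajectory of $\mathbf x$, and observes that the diagonal matrix $\mathbf M$ with $\mathbf x_1 = \mathbf M \mathbf x$ maps the pair $(\mathbf x, \mathbf y)$ to $(\mathbf x_1, \mathbf y_1)$ because diagonal scalings commute with $e^{\mathbf R \Delta t}$ --- so $g(\mathbf x, \mathbf y) \neq g(\mathbf x_1, \mathbf y_1)$ contradicts Property \ref{property:perturbinvar}. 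The paper never exhibits such a $g$; its non-sufficiency claim is, strictly speaking, conditional on one existing. You supply exactly that missing witness: $g(\mathbf x, \mathbf y) = \| \Pi_{\mathbf s^{\perp}} \log \mathbf x \|$ is constant along each trajectory (since $\log \mathbf x$ moves parallel to $\mathbf s = \mathbf r \Delta t$) yet labels the trajectory transversally, and your explicit pair with $\mathbf M = \mathrm{diag}(1,e)$ confirms it is not a function of $\mathbf r$ alone, so by \ref{ap:maxinvar} it cannot be scaling invariant --- which your direct computation also shows ($0 \neq 1$). The underlying mechanism is the same in both arguments (scalings carry exponential trajectories to exponential trajectories with the same rates but different initial conditions, so trajectory-labelling functions break scaling invariance), but your version closes the existence gap and yields a fully self-contained counterexample, at the modest cost of the extra bookkeeping in log-coordinates and the degenerate-case convention for $\mathbf x = \mathbf y$.
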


\begin{proof}
Suppose that for any initial condition $\mathbf x$, $\mathbf y = e^{\mathbf R \Delta t} \mathbf x$. Choose $\mathbf x_1$ so that there is no $t$ for which $\mathbf x_1 = e^{\mathbf R t} \mathbf x$, and $\mathbf y_1 = e^{\mathbf R \Delta t} \mathbf x_1$. Suppose that $g(\mathbf x, \mathbf y) \neq g(\mathbf x_1, \mathbf y_1)$, then Property \ref{property:perturbinvar} does not hold for $\mathbf M$ such that $\mathbf x_1 = \mathbf M \mathbf x$.
\end{proof}

\clearpage

\section{Existing measures of rate of shape change}
\label{ap:existing}

Here, I review some widely-used measures of the rate of shape change. The original authors did not use the term shape change. Instead, they used terms such as ``succession rate'', ``rate of change'' and ``compositional dissimilarity'' to describe what they were measuring. Nevertheless, they all fit Definition \ref{def:ratesuc}.

The most obvious choice for a measure of the mean rate of shape change \citep{Foster00} is $1/\Delta t$ times the Euclidean distance between the relative abundance vectors at times $t$ and $t + \Delta t$,
\begin{equation}
\frac{1}{\Delta t} \left[ \sum_{i=1}^n (p_i(t + \Delta t) - p_i(t))^2 \right ] ^{1/2}.
\label{eq:euclidean}
\end{equation}
As a physical analogy, this is the way one would calculate the mean speed of an object in space, given its position at two points in time. However, although Euclidean distance is the natural choice for points in the real space $\mathbb R^n$, relative abundances lie in the unit simplex $\mathbb S^{n-1}$, for which Euclidean distance is not a natural choice. For example, the Euclidean distance between $[0.5, 0.5]^\prime$ and $[0.4, 0.6]^\prime$ is the same as that between $[0.1, 0.9]^\prime$ and $[0.2, 0.8]^\prime$, but the latter case represents a much larger proportional change. Since exponential growth corresponds to constant proportional change, Euclidean distance is not invariant under exponential growth (Figure \ref{fig:succession}A, purple line).

In an insightful paper, \citet{Jassby74} pointed out that an absolute abundance vector can be mapped to a point $\mathbf e$ on the unit $s$-sphere by the transformation $e_i(t) = x_i(t)/[\sum_{i=1}^n (x_i(t))^2]^{1/2}$. As the community changes, it traces out a path on the sphere of length
\begin{equation*}
S(t) = \int_0^t \left | \frac{\mathrm d \mathbf e}{ \mathrm d u}\right | \, \mathrm d u.
\end{equation*}
They therefore proposed
\begin{equation}
\sigma = \mathrm d S / \mathrm d t = \left[ \sum_{i=1}^n (\mathrm d e_i/ \mathrm d t)^2 \right]^{1/2}
\label{eq:sigma}
\end{equation}
as a measure of the instantaneous rate of shape change. They estimated $\sigma$ using a finite-difference approximation, replacing $\mathrm d e_i / \mathrm dt$ by $\Delta e_i / \Delta t$, for small $\Delta t$. However, their measure is not invariant under exponential growth (Figure \ref{fig:succession}A, red line).

\citet{Lewis78} argued (wrongly, in my view) that $\sigma$ does not weight each species equally, and proposed
\begin{equation}
\sigma_s = \frac{1}{\Delta t}\sum_{i=1}^n | p_i(t + \Delta t) - p_i(t) |
\label{eq:Lewis}
\end{equation}
(over sufficiently small time intervals $\Delta t$) as a more equitable measure of the instantaneous rate of shape change. The community is represented as a point on the surface of a polygon embedded in $\mathbb R^n$, and distance is measured as the sum of absolute distances along each dimension. This is a rescaling of the taxicab metric, which is discussed extensively by \citet{Miller02}. While it has a simple geometric interpretation, it does not measure proportional change, and is not invariant under exponential growth (Figure \ref{fig:succession}A, blue line).

The Bray-Curtis distance is widely used as a measure of dissimilarity between communities, \R{A73}and changes in Bray-Curtis distances among samples in a sequence are sometimes interpreted as changes in the rate of shape change \citep[e.g.][]{Nicholson79}. It is particularly popular with marine ecologists studying changes in communities \citep{Field82}. Most of this marine ecological work has been based on the ranks of distances rather than the distances themselves, but clearly, a rank-based approach relies to some extent on the properties of the underlying distance. Denote by $\kappa_i(t) = h(x_i(t))$ the score for species $i$ at time $t$, where $h$ is some suitable transformation. Then a Bray-Curtis-based measure of the rate of change in scores per unit time is
\begin{equation}
\frac{\sum_{i=1}^n | \kappa_i(t + \Delta t) - \kappa_i(t) |}{\Delta t \sum_{i=1}^n [ \kappa_i(t + \Delta t) + \kappa_i(t) ] }
\label{eq:bc}
\end{equation}
\citep{Field82}. If $\kappa_i(t) = p_i(t)$, then this is just $1/2$ times Lewis's $\sigma_s$ (Equation \ref{eq:Lewis}). However, \citet{Field82} recommend a transformation that reduces the weighting given to very abundant species, such as the fourth root. For relative abundances, this is $\kappa_i(t) = [p_i(t)]^{1/4}$. Whether the transformation is applied to relative or to absolute abundances, the result is not invariant under exponential growth (Figure \ref{fig:succession}A, green line, using relative abundances).

Plant ecologists often use methods based on correspondence analysis to search for relationships between environmental variables and community structure \citep{terBraak85}. Correspondence analysis preserves the chi-square distance between communities:
\begin{equation*}
d_{\chi^2}(t, t+ \Delta t) = (x_{..})^{1/2} \left[\sum_{i=1}^n \frac{1}{x_{.i}} \left(p_i(t + \Delta t) - p_i(t)  \right)^2 \right] ^{1/2},
\
\end{equation*}
where $x_{..}$ is the sum of abundances over all species and times, and $x_{.i}$ is the sum of abundances of the $i$th species over all times  \citep{Legendre01}. The desire to interpret distances on a correspondence analysis biplot as measures of the amount of change leads to
\begin{equation}
\frac{d_{\chi^2}(t,t+\Delta t)}{\Delta t}
\label{eq:dchi2}
\end{equation}
 as a possible measure of the rate of shape change. Equation \ref{eq:dchi2} is not invariant under exponential growth (Figure \ref{fig:succession}A, orange line). Detrended correspondence analysis \citep{Hill80} is one of the most popular methods based on correspondence analysis, and distances on a detrended correspondence analysis plot are sometimes viewed as amounts of change \citep[e.g.][p. 253]{Jacobson86, Bush04, Walker03}. I do not discuss detrended correspondence analysis in detail, but the additional ad-hoc operations it involves do not alter the basic conclusion that chi-square distances are not invariant under exponential growth.

The patterns of change over time in Equations \ref{eq:sigma} to \ref{eq:dchi2} in a community of exponentially-growing species depend on the proportional growth rates and initial abundances of the species, so that they may in other cases show patterns quite different from those in Figure \ref{fig:succession}A. I therefore do not attempt to make any statements about their relative usefulness, other than that changes in their values cannot be used to identify ecologically meaningful changes in the rate of shape change. 

\clearpage
\section{Deriving a measure of rate of shape change}
\label{ap:derive}

Here, I derive a measure of the rate of shape change which has Properties \ref{property:ratefunc} to \ref{property:richness}, for the case where there are no colonizations or extinctions.

I start with an analogous but simpler problem. Consider a single species with abundance $x(t)$ at time $t$, where $x(t)$ is a differentiable, positive function of time. Denote by $\tilde{r}(t,t+\Delta t)$ the mean proportional growth rate over the time interval $(t,t + \Delta t]$, which is
\begin{equation*}
\tilde{r}(t,t+\Delta t) = \frac{1}{\Delta t} \int_t^{t+\Delta t} \frac{1}{x} \frac{\mathrm d x}{ \mathrm d u} \, \mathrm du = \frac{\log x(t + \Delta t) - \log x(t)}{\Delta t}, \quad x>0,
\end{equation*}
(throughout, $\log$ denotes the natural logarithm). This mean proportional growth rate is obviously a constant for all patterns of change resulting in the same amount of proportional growth in a given time.

I will now derive a similar measure of the mean rate of proportional change in relative abundances (thus satisfying Properties \ref{property:ratefunc} and \ref{property:timefunc}). All the information in relative abundance data is contained in the set of ratios $v_{ij}=x_i/x_j$, $i=1,\ldots,n-1$, $j=i+1,\ldots,n$, satisfying Property \ref{property:scaleinvar}). In other words, a basis for the space of relative abundances can be constructed from functions of these ratios \citep{Egozcue03}. We therefore need only consider the vector of ratios $\mathbf v(t) = [v_{12}(t), v_{13}(t), \ldots, v_{1n}(t), v_{23}(t),\ldots v_{n-1,n}(t)]^{\prime}$. \R{A6}Although $\mathbf v(t)$ is a generating set for the space of relative abundances, it is not a minimal generating set (basis). Nevertheless, treating all its elements symmetrically makes it easy to see the biological interpretation of the resulting measure, and results in the same measure as would be obtained from a basis.

There are $c=n(n-1)/2$ elements in $\mathbf v(t)$, but their order is not important in what follows. Denote by $w_k$ the mean proportional rate of change in the $k$th element of $\mathbf v$ (which I index by $v_{ij}$) over the time interval $(t,t + \Delta t]$, and denote by $\tilde{r}_i(t,t+\Delta t)$ the mean proportional growth rate of the $i$th species over the time interval $(t,t+\Delta t]$, defined as in Equation \ref{eq:meanrate}. Then, assuming $v_{ij}$ is a differentiable function of time,
\begin{equation*}
\begin{aligned}
w_k &= \frac{1}{\Delta t} \int_t^{t + \Delta t} \frac{1}{v_{ij}}\frac{\mathrm d v_{ij}}{\mathrm d u }\, \mathrm d u, \quad v_{ij} > 0,\\
&= \frac{\log v_{ij}(t + \Delta t)-\log v_{ij}(t)}{\Delta t}\\
&= \frac{1}{\Delta t}\left[\log \frac{x_i(t+\Delta t)}{x_j(t+\Delta t)} - \log \frac{x_i(t)}{x_j(t)} \right]\\
&= \frac{1}{\Delta t} [ (\log x_i(t+\Delta t) - \log x_i(t))- (\log x_j(t+\Delta t) - \log x_j(t)) ] \\
&=\tilde{r}_i(t,t+\Delta t) - \tilde{r}_j(t,t+\Delta t).
\end{aligned}
\end{equation*}
This expression will be zero for any $i$ and $j$ if $\mathbf x(t + \Delta t) = \mathbf x (t)$ (Property \ref{property:zero}), or if $\mathbf x(t + \Delta t) = a \mathbf x (t)$, where $a > 0$ (Property \ref{property:neutral}). Because $w_k$ is the difference between mean proportional growth rates, any pattern of growth that leads to the same proportional change in species $i$ and $j$ over a given time interval will result in the same value of $w_k$, and Property \ref{property:perturbinvar} is satisfied.

The obvious scalar measure of the mean rate of proportional change in relative abundances is then the Euclidean norm of the vector $\mathbf w = [w_1,w_2,\ldots,w_c]^{\prime}$, which is
\begin{equation}
\begin{aligned}
\| \mathbf w \| &= \left[ \sum_{k=1}^c w_k^2 \right] ^{1/2}\\
&= \frac{1}{\Delta t} \left[ \sum_{i=1}^{n-1} \sum_{j=i+1}^n \left[ \log v_{ij}(t + \Delta t) - \log v_{ij}(t) \right]^2 \right] ^{1/2}\\
&= \left[ \sum_{i=1}^{n-1} \sum_{j=i+1}^n \left[ \tilde{r}_i(t,t+\Delta t) - \tilde{r}_j(t,t+\Delta t) \right]^2 \right]^{1/2}.
\end{aligned}
\label{eq:dA}
\end{equation}
Squaring each element in $\mathbf w$ ensures that the measure will be non-negative (Property \ref{property:nonneg}).

The second line of Equation \ref{eq:dA} expresses $\| \mathbf w \|$ in a way that shows it is proportional to the Aitchison distance \citep{Aitchison92, Egozcue03} between the relative abundance vectors at times $t$ and $t + \Delta t$\R{A22}. The Aitchison distance is in some sense the simplest measure of difference between two compositions that satisfies Properties \ref{property:scaleinvar}, \ref{property:zero} and \ref{property:perturbinvar}, as well as the additional requirements of symmetry and invariance under arbitrary reorderings of the species \citep{Aitchison92}. It also satisfies the triangle inequality. The third line of Equation \ref{eq:dA} expresses $\| \mathbf w \|$ in a way that shows it is a function of the mean proportional growth rates of all the species.

To obtain a measure that is independent of the number of species (Property \ref{property:richness}), I will use a geometric interpretation of Equation \ref{eq:dA}. For a given set of species, every vector of proportional growth rates $\mathbf r$ (dropping the time indexing for simplicity) can be represented as a point in a space $\mathbb R^n$ which I refer to as the growth space of the community (Figure \ref{fig:projection} is a two-dimensional example). The growth rate vector $\mathbf r$ can be decomposed into two orthogonal components. The first component is the projection $\mathbf u$ of $\mathbf r$ onto the line of equal proportional growth rates, which represents change in abundance without shape change. I call this size change. \R{A16}It is straightforward \citep[e.g.][p. 69]{Saville91} to show that $\mathbf u = \bar{r} \mathbf 1$, where\R{A21}
\begin{equation}
\bar{r} = \frac{1}{n} \sum_{i=1}^n \tilde{r}_i,
\label{eq:lpi}
\end{equation}
with dimensions time$^{-1}$. In other words, each element of $\mathbf u$ is the among-species sample mean of mean proportional growth rates. The second component $\mathbf q$ is orthogonal to $\mathbf u$, and the square of its Euclidean norm is \R{A19}
\begin{equation*}
\| \mathbf q \|^2 = \sum_{i=1}^n (\tilde{r}_i - \bar{r})^2,
\end{equation*}
\citep[e.g.][pp. 49-50]{Saville91}. It is straightforward to show that $\| \mathbf w \|^2 = n \| \mathbf q \|^2$:
\begin{equation*}
\begin{aligned}
\| \mathbf w \|^2 &= \sum_{i=1}^n \sum_{j=i+1}^n \left[ \tilde{r}_i - \tilde{r}_j \right]^2 \\
&= \frac{1}{2} \sum_{i=1}^{n-1} \sum_{j=1}^n \left[ \tilde{r}_i - \tilde{r}_j \right]^2 \\
&= \frac{1}{2} \sum_{i=1}^n \sum_{j=1}^n \left[ (\tilde{r}_i-\bar{r}) - (\tilde{r}_j - \bar{r}) \right]^2 \\
&= \frac{1}{2}\sum_{i=1}^n \sum_{j=1}^n \left[ (\tilde{r}_i-\bar{r})^2 - 2(\tilde{r}_i -\bar{r})(\tilde{r}_j - \bar{r}) + (\tilde{r}_j-\bar{r})^2 \right] \\
&= n \sum_{i=1}^n (\tilde{r}_i-\bar{r})^2 - \sum_{i=1}^n \left[ (\tilde{r}_i -\bar{r}) \sum_{j=1}^n (\tilde{r}_j - \bar{r} ) \right] \\
&= n \sum_{i=1}^n (\tilde{r}_i-\bar{r})^2 - \sum_{i=1}^n \left[ (\tilde{r}_i -\bar{r}) \left( \sum_{j=1}^n \tilde{r}_j - n \bar{r} \right) \right] \\
&= n \sum_{i=1}^n (\tilde{r}_i-\bar{r})^2 - \sum_{i=1}^n (\tilde{r}_i -\bar{r}) \left( n \bar{r} - n \bar{r} \right) \quad \text{from Equation \ref{eq:lpi}}\\
&= n \sum_{i=1}^n (\tilde{r}_i-\bar{r})^2 - \sum_{i=1}^n (\tilde{r}_i -\bar{r}) \times 0 \\
&= n \| \mathbf q \| ^2.
\end{aligned}
\end{equation*}
In other words, the norm of $\mathbf q$ is proportional to the proposed scalar measure of mean rate of proportional change in relative abundances (Equation \ref{eq:dA}), which can be thought of as measuring change in shape of the community.

An obvious measure of rate of shape change that is independent of the number of species is therefore\R{A20}
\begin{equation}
\begin{aligned}
s_r &= \left [\frac{1}{n-1} \sum_{i=1}^n (\tilde{r}_i - \bar{r})^2 \right] ^{1/2}\\
&=\frac{1}{{(n-1)^{1/2}}} \| \mathbf q \| \\
&= \frac{1}{(n(n-1))^{1/2}} \| \mathbf w \|,
\label{eq:srqw}
\end{aligned}
\end{equation}
which is just the sample standard deviation of the mean proportional growth rates, with dimensions time$^{-1}$. Since the sample standard deviation is an \R{A45}asymptotically unbiased estimator of the population standard deviation of the mean proportional growth rates, the expected value of $s_r$ will (in the limit of a large number of species) be constant among communities with different numbers of species but the same population standard deviation of mean proportional growth rates, or among random samples of different numbers of species from a single community. Thus Property \ref{property:richness} is satisfied asymptotically in principle. However, the above argument assumes that the mean proportional growth rates are known exactly. In reality, the true abundances (and hence true mean proportional growth rates) will be uncertain. This uncertainty may affect how the value of $s_r$ depends on the number of species, but the way in which it does so will depend on how abundances are estimated. Furthermore, $s_r$ is biased for finite samples. In principle, this bias can be corrected, but only by assuming a particular distribution, such as the normal, for the proportional growth rates \citep[e.g.][]{Gurland71}.

\R{A78}To obtain Equation \ref{eq:srp}, use Equation \ref{eq:dA} with $v_{ij} = x_i/x_j= p_i/p_j$ and Equation \ref{eq:srqw}.

\clearpage

\section{What kinds of functions are scaling invariant?}
\label{ap:maxinvar}\R{A25}
It is worth asking whether there are many kinds of scaling invariant functions other than $s_r$, and if so, what they look like. The theorem below is essentially a rephrasing of \citet[p. 373]{Aitchison92}. It leads to the result that other scaling invariant functions take particular forms (such as contrasts), and cannot involve any kind of weighting by abundance.

\begin{theorem}
\label{theorem:maxinvar}
All scaling invariant functions are functions of the vector $\mathbf r$ of mean proportional growth rates over some fixed time interval.
\end{theorem}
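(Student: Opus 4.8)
The plan is to exhibit the componentwise ratios $y_i/x_i$ as a complete set of invariants for the scaling group, and then to observe that $\mathbf r$ is merely a bijective relabelling of these ratios. Recall that scaling invariance (Property \ref{property:perturbinvar}) is the statement $g(\mathbf x, \mathbf y) = g(\mathbf M \mathbf x, \mathbf M \mathbf y)$ for every diagonal matrix $\mathbf M$ with positive diagonal entries $m_i$, acting on pairs of strictly positive abundance vectors (we are in the no-colonization, no-extinction case, so all abundances are positive).

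First I would note that each ratio $y_i/x_i$ is itself scaling invariant, since $(\mathbf M \mathbf y)_i/(\mathbf M \mathbf x)_i = m_i y_i/(m_i x_i) = y_i/x_i$. Because $\tilde r_i = (1/\Delta t)\log(y_i/x_i)$ by Equation \ref{eq:meanrate}, and the map $z \mapsto (1/\Delta t)\log z$ is a bijection from $(0,\infty)$ onto $\mathbb R$ for fixed $\Delta t > 0$, the vector $\mathbf r$ and the vector of ratios carry exactly the same information. In particular $\mathbf r$ is scaling invariant, so every function of $\mathbf r$ is automatically scaling invariant; the content of the theorem is therefore the converse inclusion.

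The key step is to show that the orbits of the group action coincide with the level sets of the ratio map, i.e. that two pairs $(\mathbf x, \mathbf y)$ and $(\mathbf x', \mathbf y')$ with $y_i/x_i = y_i'/x_i'$ for every $i$ necessarily lie in a single orbit. I would prove this constructively: set $m_i = x_i'/x_i$, which is positive because all abundances are positive, and let $\mathbf M = \text{diag}(m_1, \ldots, m_n)$. Then $\mathbf M \mathbf x = \mathbf x'$ by construction, and the equal-ratio hypothesis gives $y_i' = x_i'(y_i/x_i) = m_i y_i$, so $\mathbf M \mathbf y = \mathbf y'$ as well. Hence the two pairs are related by a scaling, and any scaling invariant $g$ must assign them the same value.

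It follows that a scaling invariant $g$ is constant on each level set of $\mathbf r$, so it factors through $\mathbf r$: there is a function $h$ with $g(\mathbf x, \mathbf y) = h(\mathbf r)$, which is the claim. The only genuine obstacle is the converse inclusion of the previous paragraph, and even that reduces to writing down the correct diagonal matrix; the one point requiring care is the standing assumption that all abundances are strictly positive, which is exactly what guarantees the ratios are defined and the constructed $m_i$ are positive.
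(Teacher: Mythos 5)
Your proof is correct and follows essentially the same route as the paper: both arguments hinge on showing that pairs with equal growth-rate vectors lie in a single orbit, via exactly the same constructed scaling $\mathbf M = \mathrm{diag}(x_1'/x_1, \ldots, x_n'/x_n)$, i.e.\ that $\mathbf r$ is a maximal invariant. The only cosmetic difference is that the paper cites the general theorem that invariant functions factor through a maximal invariant \citep[Theorem 2.3]{Eaton89}, whereas you verify that factoring step directly; the mathematical content is identical.
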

\begin{proof}

Denote by $\mathcal Z$ the set of pairs of abundance vectors $z = (\mathbf x, \mathbf y) \in \mathbb R^n_{>0} \times \mathbb R^n_{>0}$. A scaling can be represented by an $n \times n$ matrix $\mathbf M$ with positive diagonal elements, that maps $(\mathbf x, \mathbf y)$ to $(\mathbf M \mathbf x, \mathbf M \mathbf y)$ (property \ref{property:perturbinvar}). Denote by $\mathbf Mz$ the action of $\mathbf M$ on $z$.

The matrices $\mathbf M$ form a group $\mathcal M$ under matrix multiplication \citep[problem 4.5.8]{Pollatsek09}. The orbit $O_z$ of any $z \in \mathcal Z$ is the set of elements of $\mathcal Z$ that can be reached by the action of any $\mathbf M \in \mathcal M$ on $z$, and is the set of pairs of abundance vectors equivalent to the pair $z=(\mathbf x, \mathbf y)$ under scaling \citep[p. 21]{Eaton89}.

A function $g$ is invariant if $g(z) = g(\mathbf Mz)$ for all $\mathbf M \in \mathcal M$, and maximal invariant if it is invariant and if $g(z_1) = g(z_2)$ implies $z_1 = \mathbf M z_2$ for some $\mathbf M \in \mathcal M$ \citep[Definition 2.4]{Eaton89}. In other words, a maximal invariant has the same value for all members of an orbit, and a different value for each different orbit. Any invariant function is a function of a maximal invariant \citep[Theorem 2.3]{Eaton89}. Thus, we need to show that $\mathbf r$ is a maximal invariant.

First, we show that $\mathbf r$ is invariant under scaling. Denote by $m_i$ the $i$th diagonal element of $\mathbf M$. The $i$th mean proportional growth rate for some fixed time interval $\Delta t$ is 
\begin{equation*}
r_i = \frac{1}{\Delta t} \log \left( \frac{y_i}{x_i} \right) = \frac{1}{\Delta t} \log \left( \frac{m_i y_i}{m_i x_i} \right).
\end{equation*}
Thus the vector $\mathbf r$ of mean proportional growth rates is scaling invariant.

Second, we show that $\mathbf r^{(1)} = \mathbf r^{(2)}$ implies $\mathbf r^{(1)} = \mathbf M \mathbf r^{(2)}$, for two mean proportional growth rate vectors $\mathbf r^{(1)}$, $\mathbf r^{(2)}$, some matrix $\mathbf M \in \mathcal M$, and some fixed time interval $\mathbf \Delta t$. Consider the $i$th elements of $\mathbf r^{(1)}$ and $\mathbf r^{(2)}$:
\begin{equation*}
\begin{aligned}
r_i^{(1)} &= r_i^{(2)}\\
\frac{1}{\Delta t} \log \left( \frac{y^{(1)}_i}{x^{(1)}_i} \right) &=
 \frac{1}{\Delta t} \log \left( \frac{y^{(2)}_i}{x^{(2)}_i} \right) \\
\frac{y^{(1)}_i}{y^{(2)}_i} &= \frac{x^{(1)}_i}{x^{(2)}_i}\\
y^{(1)}_i &= \frac{x^{(1)}_i}{x^{(2)}_i} y^{(2)}_i\\
\end{aligned}
\end{equation*}
Thus, $z_1 = \mathbf M z_2$ for the diagonal matrix $\mathbf M$ with $i$th diagonal element $m_i = x^{(1)}_i / x^{(2)}_i$. In consequence, $\mathbf r$ is a maximal invariant under scaling, and any scaling invariant function must be a function of $\mathbf r$.
\end{proof}

In this paper, I considered only the mean and standard deviation of the elements of $\mathbf r$. By Theorem \ref{theorem:maxinvar} and the obvious result that any function of a maximal invariant is an invariant function, there are lots of other potentially interesting scaling invariant functions of $\mathbf r$. For example, contrasts (differences between the means of subsets of the elements of $\mathbf r$) are also scaling invariant, and can be used to measure differences in rates of change between groups of organisms. However, Theorem \ref{theorem:maxinvar} shows that there is no need to consider functions that cannot be written in terms of proportional growth rates. For example, a scaling invariant function cannot involve any kind of weighting by abundance, because such weightings do not appear in the maximal invariant $\mathbf r$.

\clearpage

\section{Estimating size and shape change}
\label{ap:estimate}

To estimate $s_r$ and $\bar{r}$ from the hoverfly data, estimates of abundance in each year are needed. The following simple approach is adequate to demonstrate the behaviour of $s_r$ and $\bar{r}$ for these data, and accounts for both parameter uncertainty and stochasticity.

I treated each species as independent. At each time $t$, I modelled the observed count $c_i(t)$ of the $i$th species as being generated by a Poisson distribution with parameter $\lambda_i(t)$, the expected abundance. I modelled the log of the Poisson parameter as a random walk (in discrete time, because the observations were made at annual intervals). The model is
\begin{equation*}
\begin{aligned}
\log \lambda_i(t+1) &= \log \lambda_i(t) + \epsilon_i(t) \\
\epsilon_i(t) &\sim \mathcal N(0,\sigma^2) \\
c_i (t) & \sim \text{Poisson}(\lambda_i(t))
\end{aligned}
\end{equation*}
This is about the simplest model that could be used to describe these data. In particular, it does not include density dependence, interspecific competition or consistent trends in proportional growth rates. It is not likely to make good long-term predictions, but it is being used for smoothing, rather than forecasting. Expected log abundances were almost indistinguishable from observed abundances.

I fitted this model in 64-bit R version 3.0.1 for Linux \citep{Rcore12} using the package \verb+sspir 0.2.10+ \citep{Dethlefsen06}. I used Iterated Extended Kalman Smoothing \citep{DK01} to obtain an approximating Gaussian model. I estimated $\sigma^2$ using maximum likelihood, with the Brent algorithm in R function \verb+optim+. I initialized the Kalman filter with $E(\log \lambda_i(0))= \log c_i(0)$ and $V(\log \lambda_i(0))= 1 \times 10^6$, so that expected initial abundance matched the observations but had high uncertainty. I did not include the first time interval in the analyses of patterns in $s_r$ and $\bar{r}$. This is a quick and dirty approach, but the Kalman filter converges quickly, so that the choice of initialization has little effect on the results for later time intervals.

Using the fitted approximating Gaussian model, I sampled 1000 replicate time series of the conditional distribution of $\log \lambda_i$ and used them to calculate the distributions of $s_r$ and $\bar{r}$ as follows:
\begin{enumerate}
\item Draw a value $\sigma^{2*}$ from the normal sampling distribution of $\sigma^2$, specified by the maximum likelihood estimate and its asymptotic variance.
\item Draw a value $\log \lambda_i^*(t)$ for each time from the conditional distribution of $\log \lambda_i(t)$, given the entire observation vector $\mathbf c_i = [c_i(0),c_i(1),\ldots]^{\prime}$. This was done using the \verb+ksimulate+ function in \verb+sspir+, with the sampled $\sigma^{2*}$.
\item Obtain simulated proportional growth rates
\begin{equation*}
\tilde{r}_i^*(t,t+1) = \log \lambda_i^*(t+1) - \log \lambda_i^*(t),
\end{equation*}
for intervals of length one year (Equation \ref{eq:meanrate} with $\Delta t = 1$).
\item Use the simulated proportional growth rates to calculate simulated values of $s_r$ (Equation \ref{eq:sdgrowth}) and $\bar{r}$ (Equation \ref{eq:lpi}).
\end{enumerate}
I report the mean and 0.025- and 0.975-quantiles of the simulated distributions of $s_r $ and $\bar{r}$. The whole procedure took less than an hour with an 3.2 GHz Intel Xeon processor and 16G RAM. The code is included as an electronic enhancement.

\clearpage
\section{Measuring colonization and extinction}
\label{ap:ce}

\subsection*{Motivation}

The proportional growth rate of a single species, $\frac{1}{x} \frac{dx}{dt}$, is undefined if $x=0$. This makes it difficult to deal with colonization and extinction in the framework of proportional change. Here, I describe a solution.

When calculating a mean proportional growth rate (Equation \ref{eq:meanrate}) it is reasonable to replace zero abundances $x=0$ by $\lim_{x \rightarrow 0^+} x$, since abundances cannot be negative. Then because $\lim_{x \rightarrow 0^+} \log x = -\infty$, a colonization or extinction can be thought of as representing an infinite amount of proportional change. This creates additional difficulties. First, it does not discriminate between extinction of rare and common species, because $-\infty + \log x(t) = -\infty$ for any real $ \log x(t)$. Second, it does not discriminate between fast and slow extinctions, because $-\infty/\Delta t = -\infty$ for any real $\Delta t$. Third, the arithmetic operations involved in the size and shape components of change (Equations \ref{eq:rbar} and \ref{eq:sdgrowth}) will often be undefined.

All these problems can be solved by working with surreal rather than real numbers. The surreal numbers \citep{Conway01} are a field which contains all the real numbers, and many others besides, including infinite and infinitesimal numbers. Arithmetic operations such as addition and multiplication, and logical comparisons such as greater than, are well-defined for all surreal numbers. Instead of taking the limit as $x \rightarrow 0^+$, zero abundances can be replaced by the surreal number $1/\omega$, which is greater than zero but smaller than all positive real numbers. This is similar to the idea of non-standard analysis, in which infinitesimals are used instead of limits \citep[p. 44]{Conway01}. The integral in Equation \ref{eq:meanrate} is valid for surreal numbers \citep[section 5.1]{Fornasiero03}. The natural logarithm function is defined for all positive surreal numbers, and $\log (1/\omega) = -\omega^{1/\omega}$ \citep[pp. 161-163]{Gonshor86}, which is more negative than all the real numbers, and therefore satisfies the requirement that a colonization or extinction represents a greater amount of proportional change than any event not involving a colonization or extinction. The choice $1/\omega$ is not unique, because there are many other numbers greater than zero but smaller than all positive real numbers (for example, $2/\omega$ and $1/\omega^2$). Choosing one of these other numbers would give the same qualitative result, but $1/\omega$ is in some sense the simplest number having the required properties \citep[p. 12]{Conway01}.

\subsection*{Size change}

The among-species mean proportional growth rates (the size component) can be calculated as follows. For each time interval $(t, t+\Delta t]$, divide the set of species being considered into four subsets, depending on whether they are present or absent at the start and end of the interval:
\begin{equation*}
\begin{aligned}
\mathcal S_1 &= \{i: x_i(t) >0 \land x_i(t+\Delta t) > 0\}, \quad \text{(always present)},\\
\mathcal S_2 &= \{i: x_i(t) >0 \land x_i(t+\Delta t) = 0\}, \quad \text{(extinctions)},\\
\mathcal S_3 &= \{i: x_i(t) =0 \land x_i(t+\Delta t) > 0\}, \quad \text{(colonizations)},\\
\mathcal S_4 &= \{i: x_i(t) =0 \land x_i(t+\Delta t) = 0\}, \quad \text{(never present)}.
\end{aligned}
\end{equation*}
Denote by $k_1,k_2,k_3,k_4$ the cardinalities of these sets. Denote by $a_i$ the function
\begin{equation*}
a_i = \begin{cases} \log x_i(t+\Delta t), & x_i(t+\Delta t) >0, \\
-\omega^{1/\omega}, & x_i(t + \Delta t) =0,
\end{cases}
\end{equation*}
Denote by $b_i$ the similar function of $x_i(t)$. Denote by $a$ and $b$ the sums of log abundances for species present at times $t+\Delta t$ and $t$ respectively:
\begin{equation*}
\begin{aligned}
a &= \sum_{i \in \mathcal S_1 \cup S_3} a_i, \\
b &= \sum_{i \in \mathcal S_1 \cup S_2} b_i. \\
\end{aligned}
\end{equation*}
Then the among-species mean of mean proportional growth rates over $(t,t+\Delta t]$ is
\begin{equation}
\begin{aligned}
\bar{r} &= \frac{1}{n} \sum_{i=1}^n \frac{a_i - b_i}{\Delta t} \\
&= \frac{1}{n \Delta t} \bigg[ \sum_{i \in \mathcal S_1} (\log x_i(t+\Delta t) - \log x_i(t)) \\ 
&\quad + \sum_{i \in \mathcal S_2} (-\omega^{1/\omega} - \log x_i(t)) \\
&\quad + \sum_{i \in \mathcal S_3} (\log x_i(t + \Delta t) - (-\omega^{1/\omega})) \\
&\quad + \sum_{i \in \mathcal S_4} (-\omega^{1/\omega} - (-\omega^{1/\omega})) \bigg]\\ 
&= \frac{1}{n \Delta t} [(k_3-k_2)\omega^{1/\omega} + a -b  ].
\end{aligned}
\label{eq:infmean}
\end{equation}
This reduces to Equation \ref{eq:rbar} if all species have non-zero abundance at all times. Otherwise, $\bar{r}$ has an infinite part which will be positive if there are more colonizations than extinctions, zero if there are equal numbers of colonizations and extinctions, and negative if there are more extinctions than colonizations. The finite part of $\bar{r}$ has contributions from changes in abundance that do not involve colonization or extinction, the initial abundances of species that went extinct, and the final abundances of species that colonized.

\subsection*{Shape change}

To calculate the among-species standard deviation of mean proportional growth rates (the shape component), it is easiest to first find the among-species variance and then take its square root. The among-species sample variance is
\begin{equation}
\frac{1}{n-1}\left( \sum_{i=1}^n \tilde{r}_i^2 - n\bar{r}^2 \right).
\label{eq:samplevardef}
\end{equation}
The first term in the parentheses in Equation \ref{eq:samplevardef} is
\begin{equation}
\begin{aligned}
\sum_{i=1}^n \tilde{r}_i^2 &= \frac{1}{(\Delta t)^2} \sum_{i=1}^n (a_i - b_i)^2 \\
&=\frac{1}{(\Delta t)^2} \bigg[ \sum_{i \in \mathcal S_1}(\log x_i(t+\Delta t) - \log x_i(t))^2\\
&\quad + \sum_{i \in \mathcal S_2}(-\omega^{1/\omega}-\log x_i(t))^2 \\
&\quad + \sum_{i \in \mathcal S_3}(\log x_i(t)-(-\omega^{1/\omega}))^2 \\
&\quad + \sum_{i \in \mathcal S_4}(-\omega^{1/\omega}-(-\omega^{1/\omega}))^2  \bigg] \\
&= \frac{1}{(\Delta t)^2} \bigg[(k_2 + k_3) \omega^{2/\omega} + 2 \omega^{1/\omega} \left( \sum_{i \in \mathcal S_2} \log x_i(t) + \sum_{i \in \mathcal S_3} \log x_i(t+\Delta t) \right)\\
&\quad + \sum_{i \in \mathcal S_1}(\log x_i(t+\Delta t) - \log x_i(t))^2 + \sum_{i \in \mathcal S_2} (\log x_i(t))^2 + \sum_{i \in \mathcal S_3} (\log x_i(t + \Delta t))^2 \bigg].
\end{aligned}
\label{eq:rtilde2}
\end{equation}
From Equation \ref{eq:infmean}, the second term in the parentheses in Equation \ref{eq:samplevardef} is
\begin{equation}
\begin{aligned}
n \bar{r}^2 &= n \left[ \frac{1}{n \Delta t} [(k_3-k_2)\omega^{1/\omega} + a - b]\right]^2 \\
&= \frac{1}{n (\Delta t)^2} [(k_3-k_2)^2 \omega^{2/\omega} + 2(a-b)(k_3-k_2)\omega^{1/\omega} + (a -b)^2]. \\
\end{aligned}
\label{eq:srbar2}
\end{equation}
Using Equations \ref{eq:rtilde2} and \ref{eq:srbar2}, the among-species sample variance in mean proportional growth rates can be written in the form
\begin{equation}
\frac{1}{(n-1)(\Delta t)^2} (\alpha \omega^{2/\omega} + \beta \omega^{1/\omega} + \gamma),
\label{eq:samplevar}
\end{equation}
where the coefficients $\alpha, \beta, \gamma$ are the real numbers
\begin{equation*}
\begin{aligned}
\alpha &= k_2 + k_3 - \frac{1}{n}(k_3-k_2)^2,\\
\beta &= 2\left( \sum_{i \in \mathcal S_2} \log x_i(t) + \sum_{i \in \mathcal S_3} \log x_i(t+ \Delta t) - \frac{1}{n}(a-b)(k_3-k_2) \right),\\
\gamma &= \left(\sum_{i \in \mathcal S_1} (\log x_i(t + \Delta t) - \log x_i(t))^2 + \sum_{i \in \mathcal S_2} (\log x_i(t))^2 + \sum_{i \in \mathcal S_3}(\log x_i(t+\Delta t)^2 -\frac{1}{n} (a-b)^2\right).
\end{aligned}
\end{equation*}
If all species have non-zero abundance at all times, $\alpha=\beta=0$ and the sample standard deviation is given by Equation \ref{eq:sdgrowth}. Otherwise, to find the sample standard deviation, write the second factor in Equation \ref{eq:samplevar} as
\begin{equation*}
\alpha \omega^{2/\omega} + \beta \omega^{1/\omega} + \gamma = \alpha \omega^{2/\omega}(1 + \delta),
\end{equation*}
where
\begin{equation*}
\delta = \frac{\beta \omega^{1/\omega} + \gamma}{\alpha \omega^{2/\omega}}
\end{equation*}
is an infinitesimal number. Then
\begin{equation}
[\alpha \omega^{2/\omega}(1 + \delta)]^{1/2} = \alpha^{1/2} \omega^{1/\omega} \left[1 + \frac{1}{2}\delta + \frac{1}{2}\left(\frac{1}{2} -1 \right) \frac{\delta^2}{2} + \ldots \right] \\
\label{eq:ConwayTaylor}
\end{equation}
\citep[Theorem 24]{Conway01}. Using Equations \ref{eq:samplevar} and \ref{eq:ConwayTaylor}, the sample standard deviation is
\begin{equation*}
\frac{1}{(n-1)^{1/2}\Delta t}\left( \alpha^{1/2} \omega^{1/\omega} + \frac{\beta}{2 \alpha^{1/2}} + \text{infinitesimal terms} \right).
\end{equation*}
Unless the infinitesimal terms can be shown to have any biological interpretation, it seems reasonable to discard them, leaving just the infinite and real terms (at least for drawing graphs).

\subsection*{Activity and $\cos^2 \theta$}

The norm $\| \mathbf r \|$ is given by the square root of Equation \ref{eq:rtilde2}, which can be evaluated using Equation \ref{eq:ConwayTaylor}. Values of the scaled activity level $n^{-1/2} \| \mathbf r \|$ can be ranked by comparing coefficients of powers of $\omega$ in Equation \ref{eq:rtilde2} in descending order. The value of the first coefficient that differs between two values determines their rank.

The squared norm $\| \mathbf u \|^2$ is given by Equation \ref{eq:srbar2}. The proportion of change that is size change, $\cos^2 \theta = (\| \mathbf u \| / \| \mathbf r \|)^2$, will then be a ratio of the form
\begin{equation}
\frac{a \omega^{2/\omega} + b \omega ^{1/\omega} + c}{d \omega^{2/\omega} + e \omega ^{1/\omega} + f},
\label{eq:cos2ratio}
\end{equation}
with real coefficients $a, b, c, d, e, f$. Dividing the numerator and the denominator by $\omega^{2/\omega}$,
\begin{equation*}
\cos^2 \theta = \frac{a}{d + e \omega ^{-1/\omega} + f \omega^{-2/\omega}}  + \text{infinitesimal terms},
\end{equation*}
whose value will be close to $a/d$. I therefore use the real approximation $\cos^2 \theta \simeq a/d$ in graphs. However, the infinitesimal parts may be important in resolving ties. It is straightforward to rank two ratios of the form (\ref{eq:cos2ratio}), by cross-multiplying by their denominators and comparing the coefficients of matching powers of $\omega$. I use this approach when calculating statistics based on the ranks of $\cos^2 \theta$.

\clearpage
\section{Size and shape change in presence-absence data}
\label{ap:pa}

In the notation of \ref{ap:ce}, the mean proportional growth rate over the interval $(t, t + \Delta t]$ for the $i$th species is $\tilde{r}_i = (a_i - b_i)/\Delta t$. The coefficient of $\psi = \omega^{1/\omega}$ in $\tilde{r}_i$ is the only part that can be calculated if only presence/absence data are available. Denote this coefficient by $r^{(\psi)}_i$. Its value is
\begin{equation}
r^{(\psi)}_i = \begin{cases}
\frac{-1}{\Delta t}, & \text{extinction},\\
\frac{1}{\Delta t}, & \text{colonization},\\
0, & \text{otherwise}.
\end{cases}
\label{eq:pacode}
\end{equation}
The among-species mean of these coefficients is a natural measure of size change:
\begin{equation}
\begin{aligned}
\frac{1}{n \Delta t} \sum_{i=1}^n r^{(\psi)}_i &= k_2(-1) + k_3(1)\\
&= \frac{k_3 - k_2}{n \Delta t},
\end{aligned}
\label{eq:sizepa}
\end{equation}
where $k_2$ is the number of extinctions and $k_3$ is the number of colonizations. Equation \ref{eq:sizepa} is the coefficient of $\psi$ in Equation \ref{eq:sizeinf}.

The among-species sample standard deviation of these coefficients is a natural measure of shape change:
\begin{equation*}
\begin{aligned}
\frac{1}{\Delta t}\left[ \frac{1}{n-1} \left( \sum_{i=1}^n (r^{(\psi)}_i)^2 - \frac{1}{n} \left(\sum_{i=1}^n r^{(\psi)}_i \right)^2 \right) \right]^{1/2} \\
= \frac{1}{\Delta t}\left[ \frac{1}{n-1} \left( k_2(-1)^2 + k_3(1)^2 - \frac{1}{n} (k_3(1) + k_2(-1))^2 \right) \right]^{1/2} \\
=\frac{1}{\Delta t}\left[ \frac{1}{n-1} \left( k_2 + k_3 - \frac{1}{n} (k_3 - k_2)^2 \right) \right]^{1/2}, \\
\end{aligned}
\label{eq:shapepa}
\end{equation*}
which is the coefficient of $\psi$ in Equation \ref{eq:sdinf}.

The coefficient of $\psi^2$ in the squared activity level is
\begin{equation*}
\begin{aligned}
\sum_{i=1}^n (r^{(\psi)}_i)^2 &= \frac{k_2(-1)^2 + k_3(1)^2}{(\Delta t)^2}\\
&=\frac{k_2 + k_3}{(\Delta t)^2}.
\end{aligned}
\end{equation*}
Thus, the measures of rate of shape change proposed by \citet{Anderson07}, which are proportional to $k_2 + k_3$, are measures of squared activity.

Finally, using the results above, the scaled activity from presence-absence data is $(1/( n ^{-1/2}\Delta t)) (k_2 + k_3)^{1/2}$, and the proportion of change that is size change is $\cos^2 \theta = (k_3 -k_2)^2/(n(k_2 + k_3))$.

\end{appendices}

\clearpage
\section*{References}

\end{document}